
\documentclass[10pt,twocolumn,UKenglish]{article}

\bibliographystyle{plain}

\newcommand{\cD}{{\mathcal D}}

 \usepackage{amstext}
\usepackage{epsfig}
\usepackage{subfig}
\usepackage{psfrag}
\usepackage{comment}
\usepackage{amsmath}
\usepackage{fullpage}
\usepackage{color}
\usepackage{arcs}
\usepackage{algorithmic}
\usepackage{graphics}
\usepackage[ruled,vlined]{algorithm2e}
\usepackage{amssymb,amsmath}

\usepackage{graphicx}
\usepackage{caption}
\usepackage{amssymb}

\newtheorem{lemma}{Lemma}
\newtheorem{definition}{Definition}
\newtheorem{theorem}{Theorem}
\newtheorem{corollary}{Corollary}
\newtheorem{claim}{Claim}
\newtheorem{observation}{Observation}

\newenvironment{proof}{\textbf{Proof:}}{\hfill$\Box$}

\newcommand{\dan}{\emph{DAN}}

\newcommand{\kaushik}[1]{\textcolor{red}{(Kau: #1)}}

\newcommand{\stefan}[1]{\textcolor{blue}{(Stef: #1)}}

\renewcommand\vec[1]{\overrightarrow{#1}}
\newcommand\cev[1]{\overleftarrow{#1}}

\def\epl{\mathrm{EPL}}
\def\apd{\mathrm{APD}}
\def\nd{\mathrm{ND}}
\def\d{\mathrm{d}}
\def\p{\mathrm{p}}
\def\bE{\mathbb{E}}
\def\bnd{\mathrm{BND}}

\def\card#1{\lvert #1 \rvert}
\def\spanner{S}

\def\dan{{\tt DAN}}
\def\dang{N}
\def\danfam{\mathcal N}
\def\maxd{\Delta_{\max}}
\def\avgd{\Delta_{\mathrm{avg}}}

\title{Bounded-Degree Network Designs of Bounded Degree}

\title{Demand-Aware Network Designs of Bounded Degree\footnote{
Research supported by the German-Israeli project
\emph{Self-Adjusting Virtual Infrastructures} (G.I.F. No I-1245-407.6/2014).}}

\author{Chen Avin$^1$ \quad Kaushik Mondal$^1$ \quad 
Stefan Schmid$^2$\\
{\small $^1$ Ben Gurion University of the Negev, Israel \quad 
  Aalborg University, Denmark}}
  
  \date{}

%
%
%

\begin{document}

\maketitle


\begin{abstract}
Traditionally, networks such as datacenter interconnects
are designed to optimize worst-case performance
under \emph{arbitrary} traffic patterns.
Such network designs can however be far from optimal when
considering the \emph{actual}
workloads and traffic
patterns which they serve.
This insight led to the development of demand-aware datacenter interconnects which can be reconfigured depending on the workload.

Motivated by these trends, this paper initiates the algorithmic study of
demand-aware networks (\dan s) designs, and in particular the design of
bounded-degree networks.
The inputs to the network design problem are a discrete communication request distribution, $\cD$,  defined over communicating pairs from
the node set $V$, and a bound, $\Delta$, on the maximum degree.
In turn, our objective is to
design an (undirected) demand-aware network $\dang=(V,E)$
of bounded-degree~$\Delta$,
which provides
short routing paths between frequently communicating nodes
distributed across $\dang$.
In particular, the designed network should minimize the \emph{expected path length} on $\dang$
(with respect  to $\cD$), which is a basic measure of the efficiency of the network.

We show that this fundamental network design problem exhibits interesting connections
to several classic combinatorial problems and to information theory.
We derive a general lower bound based
on the entropy of the communication pattern $\cD$, and present asymptotically optimal
network-aware design algorithms for important distribution families, such as sparse
distributions and distributions of locally bounded
doubling dimensions.
\end{abstract}

\section{Introduction}\label{sec:intro}


The problem studied in this paper is motivated by
the advent of
more flexible datacenter interconnects, such as ProjectToR~\cite{Jia2017,ghobadi2016projector}.
These interconnects aim to overcome a fundamental drawback of
traditional datacenter network designs: the fact that network designers
must decide \emph{in advance} on how much capacity to provision
between electrical packet switches, e.g., between
top-of-rack (ToR) switches in datacenters. This leads to an undesirable
tradeoff~\cite{fat-free}: either capacity is over-provisioned and therefore the interconnect
expensive (e.g., a fat-tree provides full-bisection bandwidth),
or one may risk congestion, resulting in a poor cloud application performance.
Accordingly, systems such as ProjectToR provide a reconfigurable interconnect,
allowing to establish links flexibly and
in a \emph{demand-aware manner}. For example,
direct links or at least short communication paths can be established between
frequently communicating ToR switches.
Such links can be implemented using
a bounded number of lasers, mirrors, and photodetectors
per node~\cite{ghobadi2016projector}.
First experiments with this technology demonstrated promising
results: although the interconnecting networks is of
bounded degree, short routing paths
can be provided between communicating nodes.

The problem of designing demand-aware
networks is a fundamental one, and finds interesting
 applications in many distributed and networked systems.
 For example, while many peer-to-peer overlay networks today are designed towards optimizing the \emph{worst-case performance}
(e.g., minimal diameter and/or degree), it is an intriguing question
whether the ``hard instances'' actually show up in real life,
and whether better topologies can be designed if we are given more information about the actual communication patterns 
these networks serve in practice.

While the problem is natural,
surprisingly little is known today about the design of demand-aware networks.
At the same time, as we will show in this
paper, the design of demand-aware networks is related to
several classic combinatorial problems.

Our vision is reminiscent in spirit to the question posed by Sleator and Tarjan
over 30 years ago in the context of binary search trees~\cite{Demaine:2009:GBS:1496770.1496825,Sleator85}:
While there is an inherent lower bound of $\Omega(\log{n})$ for accessing
an arbitrary element in a binary search tree, can we do better on some ``easier'' instances? The authors identified the \emph{entropy} to be a natural metric to measure the performance under actual demand patterns.
We will provide evidence in this paper that the entropy, in a slightly different flavor, also plays a crucial role in the context of network designs, establishing an interesting connection.

\noindent \textbf{The Problem: Bounded Network Design.}
We consider the following network design problem, henceforth referred to as the \emph{Bounded Network Design} problem, short  \emph{BND}.
We consider a set of $n$ nodes (e.g., top-of-rack switches, servers, peers)
$V=\{1,\ldots,n\}$
interacting according to a certain \emph{communication pattern}.
The pattern is modelled by $\cD$, a discrete distribution over \emph{communication
requests} defined over $V \times V$.
We represent this distribution using a communication
matrix $M_{\cD}[\p(i,j)]_{n\times n}$ where
the $(i,j)$ entry indicates the communication frequency, $p(i,j)$, from the
(communication) source $i$ to the (communication) destination $j$. The matrix is normalized,
i.e., $\sum_{ij} \p(i,j)=1$.
Moreover, we can interpret the distribution $\cD$
as a weighted directed \emph{demand graph} $G_{\cD}$, defined over the same set of nodes $V$:
A directed edge $(u, v) \in E(G_{\cD})$ exists iff $\p(u, v) > 0$. We set the edge weight
to the communication probability: $w(i,j) = \p(i,j)$.

In turn, our objective is to design an unweighted,
undirected \emph{Demand-Aware Network}~(\dan)
defined over the set of nodes $V$ and the distribution $\cD$,
henceforth denoted as $\dang(\cD)$ or just $\dang$ when $\cD$ is clear from the context.
The objective is that $\dang(\cD)$ optimally serves
the communication requests from $\cD$ under the constraint that
$\dang$ must be chosen from a certain family of
\emph{desired topologies} $\danfam$.
In particular, we are interested in \emph{sparse} networks
(i.e., having a \emph{linear number} of edges)
with \emph{bounded} degree $\Delta$
(i.e., nodes have a small number of lasers~\cite{ghobadi2016projector}),
and we denote the family of $\Delta$-bounded degree graphs by $\danfam_{\Delta}$.

Note that the designed network can be seen as ``hosting''
the served communication pattern, i.e., the demand graph is embedded
on the designed network. Accordingly, we will sometimes refer to the
demand graph as the \emph{guest network} and to the designed
network as the  \emph{host network}.

Our objective is to minimize the \emph{expected path length}~\cite{Avin:2013:SGN:2689153.2689159,obst,Schmid16} of the designed
host network $\dang\in \danfam$:
For $u,v \in V(\dang)$, let $\d_ {\dang}(u,v)$ denote the shortest path between $u$ and $v$ in $\dang$.
Given a distribution $\cD$ over $V \times V$ and a graph $\dang$ over $V$, the \emph{Expected Path Length (EPL)} of route requests is defined as:
$$
\epl(\cD, \dang) = \bE_{\cD}[\d_ {\dang}(\cdot, \cdot)]=\sum_{(u,v) \in \cD} \p(u, v) \cdot \d_{\dang}(u,v)
$$

\begin{figure*}
\begin{center}
	\begin{tabular}{ccc}
	   \includegraphics[width=.3\textwidth]{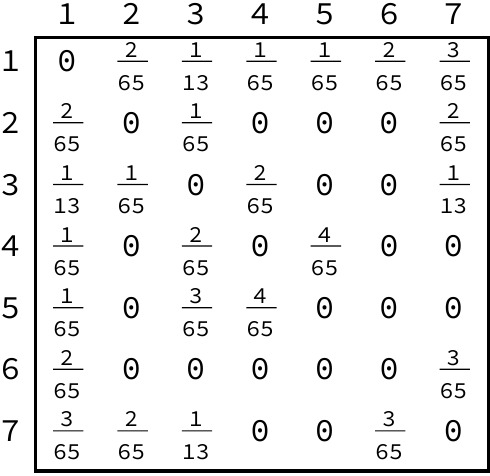} &
	   \includegraphics[width=.3\textwidth]{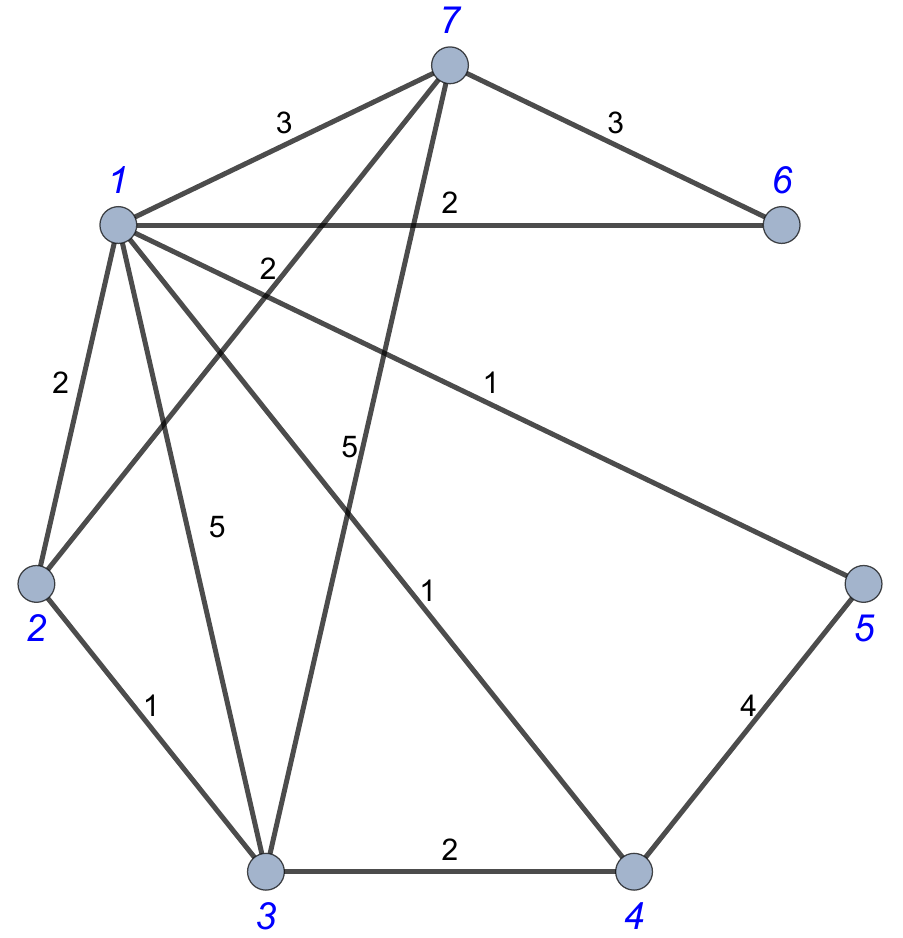}  &
	    \includegraphics[width=.3\textwidth]{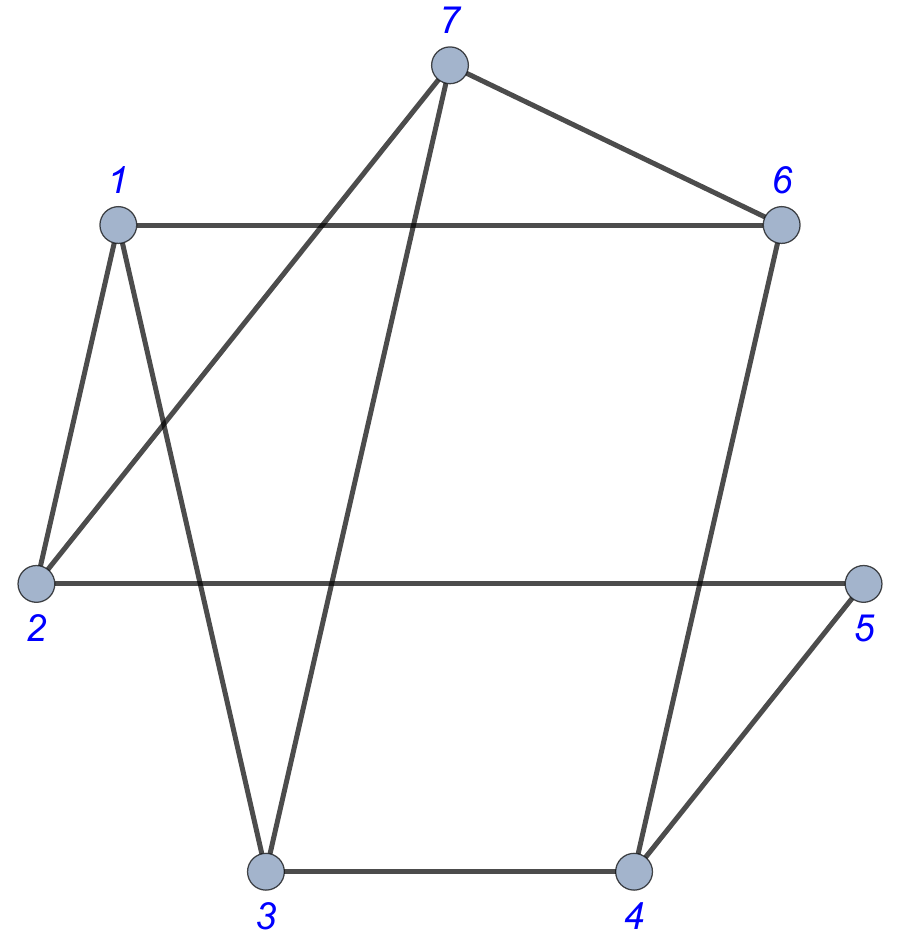} \\
	    (a) & (b) & (c)
		\end{tabular}
	\caption{Example of the \emph{bounded network design} problem. (a) A given demand distribution $\cD$ (which in this case is \emph{symmetric}). (b) The demand graph $G_{\cD}$ (with non-normalized weights).
	Nodes 1, 3, and 7 have a degree more than 3. (c) An optimal solution $\dan$~$\dang$ with $\Delta = 3$. In this case, the solution is not a subgraph but contains auxiliary edges
	(e.g., $\{2,5\}$), and $\epl(\cD, \dang) = 1.19$ while $H(X \mid Y) = 1.08$ (the
	Shannon entropy to the base 3 is $H(X)=1.68$).}
	\label{fig:model}
\end{center}
\end{figure*}

Since routing across the host network usually occurs along shortest paths,
the expected path length captures
the average hop-count of a route (e.g., latency incurred or energy consumed along the way).

Succinctly, the Bounded Network Design (BND) problem is to minimize the expected path length and is defined as follows:
\begin{definition}[\bf Bounded Network Design]
Given a communication distribution, $\cD$ and a maximum degree $\Delta$,
find a host graph $\dang \in \danfam_\Delta $ that minimizes the expected path length:
\begin{align*}
\bnd(\cD,\Delta) = \min_{\dang \in \danfam_\Delta} \epl(\cD, \dang)
\end{align*}
\end{definition}

\subsection{Our Contributions}
This paper initiates the study of a fundamental problem:
the design of demand-aware communication networks.
While our work is motivated by recent trends in datacenter network designs,
our model is natural and finds applications in many distributed and networked systems (e.g., peer-to-peer overlays).
The main contribution of this paper is to establish an interesting connection
of the network design problem to the conditional entropy of the communication matrix. In particular,
we present a lower bound
on the expected path length
of a network with maximum degree $\Delta$
which is proportional to the conditional
entropy of $\cD$,  $H_{\Delta}(X \mid Y) + H_{\Delta}(Y \mid X)$ where $\Delta$ is the base of the logarithm used for calculating the entropy.
While this lower bound can be as high as $\log n$,
for many distributions it can be much lower (even constant).
Our main results are presented in
Theorem~\ref{lem:sparsetransform} which proves
a matching upper bound for the case when $\cD$ is a sparse distribution, and
Theorem~\ref{thm:ldd} which proves a matching upper bound for the case when
$\cD$ is a regular and uniform (but maybe dense) distribution of a locally bounded doubling dimension.
Also in these two cases the conditional entropy could range from a constant and up to $\log n$.
At the heart of our technical contribution is a novel technique to
transform a low-distortion network of some high maximum degree 
to a low-degree
network whose maximum degree equals the average degree of the original network,
while maintaining an expected path length in the order of
the conditional entropy. Moreover, we
show an interesting reduction of uniform and regular distributions
to graph spanners in Theorem~\ref{th:spanner2design}.

%
%
%

\subsection{Paper Organization}

The remainder of this paper is organized as follows.
We put our work into perspective with respect to related work
in Section~\ref{sec:relwork} and introduce some preliminaries in Section~\ref{sec:preliminaries}.
We derive lower bounds in Section~\ref{sec:lowerbound} and
present algorithms to design networks for sparse distributions
resp.~regular and uniform distributions in Section~\ref{sec:sparse} resp.~Section~\ref{sec:reguni}.
We conclude our work and outline
directions for future research
in Section~\ref{sec:conclusion}.

\section{Putting Things Into Perspective and Related Work}\label{sec:relwork}

There are at least three interesting perspectives on our problem.
The first one arises when trying to gain some intuition about the problem complexity.
If $\Delta=n$, the problem is simple:
the demand (or guest) graph $G_{\cD}$
itself can be used as the host graph or $\dan$ $\dang \in \danfam_\Delta$,
providing an ideal expected path length 1.
If a sparse host graph is desired, a star topology
could be used as a $\dan$ to provide an expected path length of at most~2.
At the other end of the spectrum, if $\Delta=2$ (and the host network is required to be connected)
the $\dan$~$\dang$ must be a line  or a ring graph.
However, the problem of how to  arrange nodes
on the linear chain or the ring such that the expected path length is minimized,
is already NP-hard: the problem is essentially a
Minimum Linear Arrangment (MinLA) problem~\cite{charikar20062,feige2007improved,harper1964optimal}.
One perspective to see our contribution is that in this paper, we are
interested in what happens
between these extremes, for other values of $\Delta$, in particular for a constant $\Delta$ which guarantees that our host network
will be sparse, i.e., has a linear number of edges.
In contrast to the general arrangement problem which asks
for an embedding of the guest graph on a \emph{specific}
and \emph{given} host graph,
in our network design problem we are free to \emph{choose the best} host graph from a given family of graphs (i.e., bounded degree graphs).
One might wonder: does this flexibility make the problem
easier?

Sparse and distance-preserving spanners open a
second perspective on our work: intuitively, a good host
graph $\dang$ for $G_{\cD}$ ``looks similar'' to $G_{\cD}$.
But in contrast to classic spanner problems
in the literature which are primarily concerned with minimizing the worst-case \emph{distortion} (resp.~the average distortion)
among \emph{all} node pairs \cite{Chan2006,Peleg1989},
we are only interested in the \emph{local distortion}.
Namely, we aim to find a good ``spanner''
which preserves
\emph{locality of neighborhoods}, i.e., 1-hop
neighborhoods in the demand graph.
Second, unlike classic spanner problems but similar
to geometric (metric) spanners,
the designed network $\dang$ does not have to be a subgraph and may include edges which do not
exist in the demand network $G_{\cD}$, i.e., 0-entries in the corresponding communication matrix $M_{\cD}$.
We refer the corresponding edges as
\emph{auxiliary edges} (a.k.a.~shortcut edges~\cite{Meyerson2009}).
It is easy to see that auxiliary edges
can indeed be required to compute optimal network designs, and yield
strictly lower communication costs than subgraph spanners. Third,
in contrast to the frequently studied
sparse graph spanner problem variants,
we require that nodes in the designed network
are of degree at most $\Delta$.
Finally, we are not aware of any work studying the relationship between
spanners and entropy.
This makes our model fundamentally different from existing models
studied in the literature. 

The fact that our matrix represents a distribution
provides some interesting structure.
In particular, it leads us to a third connection, namely to
information and coding theory, see also~\cite{cacd} for a code-based
network deisgn of arbitrary degree.
It is known that the expected path length in binary search trees~\cite{Sleator85}
as well as in network designs providing local routing~\cite{obst,cacd,Schmid16} is related to the entropy
$H(X)$ (over the elements $X$ in the data structure) resp.~conditional entropy of the distribution:
$H(X|Y)+H(Y|X)$ is a lower bound on the expected path length of
local routing tree designs~\cite{Schmid16}
where $X$, $Y$ are the random variables distributed according to the marginal distribution of the sources and destinations in $\cD$.
This bound is tight for the limited case where $D$ is a product distribution (i.e., $p(i,j) = p(i)p(j)$).
Additionally the optimal binary search tree can be computed efficiently for every $\cD$ using dynamic programming~\cite{Schmid16}.
In the current work we extend this line of research by studying more general distributions  
and a larger family of host networks.

\section{Preliminaries}\label{sec:preliminaries}

We start with some notation about $\cD$. Let $\cD[i,j]$ or $p(i,j)$ denote the probability that source $i$ routes to destination $j$.
Let $p(i)$ denote the probability that $i$ is a source, i.e., $p(i)=\sum_j p(i,j)$. Similarly let $q(j)$ denote the probability that $j$ is a destination.
Let $X, Y$ be random variables describing the marginal distribution of the sources and destinations in $\cD$, respectively.
Let $\vec{\cD}[i]$ denote the normalized $i$'th row of $\cD$, that is, the probability distribution of destinations given that the source is $i$.
Similarly let $\cev{\cD}[j]$ denote the normalized $j$'th column of $\cD$, that is the probability distribution of sources given that the destination is $j$.
Let $Y_i$ and $X_j$ be random variables that are distributed according to $\vec{\cD}[i]$ and  $\cev{\cD}[j]$, respectively.
We say that $\cD$ is \emph{regular} if $G_{\cD}$ is a regular graph (both in terms of in and out degrees). We say that $\cD$ is \emph{uniform} if for every $\cD[i,j]>0$, $\cD[i,j]=\frac{1}{m}$ and $m$ is the number of edges in $G_{\cD}$. We say that $\cD$ is \emph{symmetric} if $\cD[i,j]=\cD[j,i]$.


We will show that a natural measure to assess the quality of a
designed network relates to the \emph{entropy} of the communication pattern.
For a discrete random variable $X$ with possible values
$\{x_1, \dots , x_n\}$, the entropy $H(X)$ of $X$ is defined as
\begin{align}
H(X) = \sum_{i=1}^n p(x_i)\log_2\frac{1}{p(x_i)}
\end{align}
where $p(x_i)$ is the
probability that $X$ takes the value $x_i$. Note that, $0 \cdot
\log_2\frac{1}{0}$ is considered as 0.
If $\bar{p}$ is a discrete distribution vector (i..e, $p_i \ge 0$ and $\sum_i p_i =1$) then we may write $H(\bar{p})$ or $H(p_1, p_2, \dots p_n)$ to denote the entropy of a random variable that is distributed according to $\bar{p}$.
If $\bar{p}$ is the uniform distribution with support $s$
($s$ being the number of places in the distribution with
$p_i > 0$, i.e., $p_i=1/s$) then  $H(\bar{p})= \log s$.

Using the decomposition (a.k.a.~grouping) properties of entropy the following is well-known~\cite{cover2012elements}:
\begin{align}\label{decompos1}
H(p_1, p_2, p_3 \dots p_m) \ge H(p_1 + p_2, p_3 \dots p_m)
\end{align}
and
\begin{align}\label{decompos2}
H(p_1, \dots,  p_m) & \ge (1-p_1)H(\frac{p_2}{1-p_1},\frac{p_3}{1-p_1} \dots \frac{p_m}{1-p_1})
\end{align}
For a joint distribution over $X, Y$, the \emph{joint entropy} is
defined as
\begin{align}
H(X, Y) = \sum_{i,j} p(x_i, y_j)\log_2\frac{1}{p(x_i,y_j)}
\end{align}
Also recall the definition of the
\emph{conditional entropy} $H(X \vert Y)$:
\begin{align}
H(X \vert Y) &= \sum_{i,j} p(x_i, y_j)\log_2\frac{1}{p(x_i \mid y_j)} 
\\ &= \sum_{j} p(y_j) \sum_{i} p(x_i \mid y_j )\log_2\frac{1}{p(x_i \mid y_j )}  \notag \\
&= \sum_{j=1}^n p(y_j)H(X \vert Y = y_j)
\end{align}
For $X$ and $Y$ defines as above from $\cD$ we also have
 \begin{align}
H(X \vert Y) &= \sum_{j=1}^n p(y_j)H(X \vert Y = y_j)  \\ &=  \sum_{j=1}^n q(j)H(\cev{\cD}[j]) \\ &=  \sum_{j=1}^n q(j)H(X_j)
\end{align}
 $H(Y \vert X)$ is defined similarly and we note that it may be the case that  $H(X \vert Y) \neq H(Y \vert X)$.
We may simply write $H$ for the entropy if
the purpose is given by the context.
By default, we will denote by $H$ the entropy computed using the binary
logarithm; if a different logarithmic basis $\Delta$ is used to compute the entropy,
we will explicitly write $H_\Delta$.

We recall the definition of a graph \emph{spanner}. Given a graph $G=(V,E)$, a subgraph $G'=(V,E')$ is a $t$-spanner of $G$ if for every
$u,v \in V$, $t \cdot d_G(u,v) \geq d_{G'}(u,v)$ and $t$ is the \emph{distortion} of the spanner.
We say that $G'=(V,E')$ is a \emph{graph metric} $t$-spanner if it is not a subgraph of $G$, i.e., it may have additional edges that are not in $G$.


\section{A Lower Bound}\label{sec:lowerbound}

We now establish an interesting connection to information theory
and show that the conditional entropy serves as a natural metric for bounded network
designs.
In particular, we prove that the expected path length BND$(\cD,\Delta)$ in any
demand-aware bounded network design, is at least in the order
of the conditional entropy. Formally:

\begin{theorem}
\label{th:lowerboundBND}
Consider the joint frequency distributions $\cD$.
Let $X$, $Y$ be the random variables distributed according to the marginal distribution of the sources and destinations in $\cD$, respectively.
Then
$$
\bnd(\cD,\Delta) \ge \Omega (\max(H_\Delta(Y|X), H_\Delta(X|Y))
$$
\end{theorem}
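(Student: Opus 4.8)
The plan is to derive the bound from a source-coding (prefix-free coding) argument, using the degree constraint only through the fact that balls in $\dang$ grow at most exponentially with rate $\Delta$. It suffices to prove the one-sided bound $\epl(\cD,\dang) \ge \Omega(H_\Delta(X\mid Y))$ for every $\dang \in \danfam_\Delta$; the companion bound for $H_\Delta(Y\mid X)$ follows by swapping the roles of source and destination, and taking the maximum then yields the claim. Since an infinite expected path length makes the inequality trivial, I may assume $\dang$ is connected.

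First I would fix a destination $j$ and isolate the contribution $\sum_i p(i\mid j)\,\d_{\dang}(i,j)$ of all sources routing to $j$, where the conditional distribution is exactly $\cev{\cD}[j]$, i.e.\ the law of $X_j$. Consider a shortest-path (BFS) tree of $\dang$ rooted at $j$: every vertex $i$ sits at depth exactly $\d_{\dang}(i,j)$, the root has at most $\Delta$ children, and every other vertex has at most $\Delta-1$ children, so the branching factor is at most $\Delta$ everywhere. This is the only place the degree bound is used.

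Next I would convert this tree into a prefix-free code over an alphabet of size $\Delta+1$: encode each vertex $i$ by the sequence of child-choices along the root-to-$i$ path (each from at most $\Delta$ options) followed by a dedicated ``stop'' symbol. One checks this is prefix-free—if one vertex is an ancestor of another, their codewords first differ where one has the stop symbol and the other a child symbol—and the codeword length of $i$ is $\d_{\dang}(i,j)+1$. The converse source-coding theorem (equivalently, Kraft's inequality with the log-sum inequality) then gives
\begin{align*}
\sum_i p(i\mid j)\bigl(\d_{\dang}(i,j)+1\bigr) \ge \frac{H(X_j)}{\log_2(\Delta+1)} \ge \tfrac{1}{2}\,H_\Delta(X_j),
\end{align*}
using $\log_2(\Delta+1)\le 2\log_2\Delta$ for $\Delta\ge 2$ and $H_\Delta(X_j)=H(X_j)/\log_2\Delta$.

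Finally I would average over $j$ with weight $q(j)$. Since $q(j)\,p(i\mid j)=p(i,j)$ and $\sum_j q(j)H_\Delta(X_j)=H_\Delta(X\mid Y)$ (from the Preliminaries), the left-hand side becomes $\epl(\cD,\dang)+1$, so $\epl(\cD,\dang)\ge \tfrac12 H_\Delta(X\mid Y)-1=\Omega(H_\Delta(X\mid Y))$. The main obstacle is the middle step: turning the purely metric ``ball-growth'' consequence of the degree bound into a genuine prefix-free code. The subtlety is that internal tree nodes, not only leaves, are themselves sources, so the naive ``path as codeword'' assignment fails to be prefix-free; introducing the stop symbol—hence shifting the alphabet from size $\Delta$ to $\Delta+1$—is precisely what repairs this, and it is what absorbs the constant factor hidden in the $\Omega(\cdot)$.
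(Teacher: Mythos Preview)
Your proof is correct and essentially matches the paper's approach. The paper factors the argument through an intermediate lemma (their Lemma~\ref{lem:DeltaregularH}) bounding the expected depth of an optimal $\Delta$-ary search tree by $H_\Delta(\bar p)/\log(\Delta+1)$ and then compares the BFS tree of $\dang$ rooted at each node to that optimum, whereas you build the prefix-free code directly from the BFS tree; but the underlying Kraft/Gibbs step with the $(\Delta+1)$-size alphabet---your ``stop symbol'' is exactly their choice of weights $(\Delta+1)^{-(b_i+1)}$ for internal nodes---is the same in both.
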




Before delving into the proof, let $\epl(\bar{p},T)$ denote the expected path length in a tree $T$ from the root to its nodes where the expectation it taking
over a distribution $\bar{p}$. That is $\epl(\bar{p},T) = \sum_i p_i d_T(root, i)$.
We recall the following well-known theorem:
\begin{theorem}[\cite{Mehlhorn75}, restated.]
\label{th:sseplH}
Let $H(\bar{p})$ be the entropy of the frequency distribution $\bar{p} = (p_1, p_2, \ldots ,p_n)$.
Let $T$ be an optimal binary search tree built over the above frequency distribution.
Then $\epl(\bar{p},T) \ge \frac{1}{\log 3}H(\bar{p})$.
\end{theorem}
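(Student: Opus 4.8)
The plan is to prove the bound through a Kraft-type counting inequality combined with the information (Gibbs) inequality, anchored on the observation that a binary search tree is naturally a \emph{ternary} code rather than a binary one---which is precisely the source of the factor $1/\log 3$. Throughout I would write $L_i$ for the number of nodes on the root-to-$i$ path in $T$, i.e.\ the number of key comparisons needed to locate $i$, so that $L_i = d_T(\mathrm{root},i)+1$ and the quantity to be lower bounded is $\sum_i p_i L_i$. At each internal node a comparison has three possible outcomes---go left, go right, or stop at the current key---so the search process is inherently a size-three branching, and this is the conceptual reason a base-$3$ rather than base-$2$ budget is the correct one.

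First I would establish the combinatorial core, namely $\sum_i 3^{-L_i}\le 1$. Since $T$ is binary, the number of nodes $i$ with $L_i=k$ (those at depth $k-1$) is at most $2^{k-1}$, whence $\sum_i 3^{-L_i}\le \sum_{k\ge 1}2^{k-1}3^{-k}=\frac{1}{3}\sum_{k\ge 0}(2/3)^k=1$. Equivalently, appending a ``stop'' symbol to the left/right description of each root-to-node path yields a prefix-free ternary code of length $L_i$, and Kraft's inequality gives the same conclusion; I prefer the direct geometric-series count since it is fully self-contained.

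Next I would feed this into the information inequality. Put $q_i=3^{-L_i}$ and $Z=\sum_i q_i\le 1$, and let $\tilde q_i=q_i/Z$ be the induced probability distribution. Non-negativity of relative entropy (Gibbs' inequality) gives $\sum_i p_i\log_2(1/\tilde q_i)\ge \sum_i p_i\log_2(1/p_i)=H(\bar{p})$, and since $Z\le 1$ we get $\sum_i p_i\log_2(1/q_i)=\sum_i p_i\log_2(1/\tilde q_i)-\log_2 Z\ge H(\bar{p})$. Substituting $\log_2(1/q_i)=L_i\log_2 3$ yields $(\log_2 3)\sum_i p_i L_i\ge H(\bar{p})$, that is $\epl(\bar{p},T)\ge H(\bar{p})/\log 3$. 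Note that optimality of $T$ is never used: the bound holds for every binary search tree, which is exactly what makes it a valid lower bound on the optimum.

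The main obstacle here is conceptual rather than computational: recognizing that the admissible Kraft budget is base $3$, not base $2$. A naive binary accounting fails because a binary search tree stores keys at internal nodes too, so root-to-node paths are not prefix-free; resolving this via the left/right/stop ternary encoding is what forces the constant $1/\log 3$. I would also flag the minor normalization point exposed by the argument: the bound with the stated constant requires the per-access cost to count the comparison at the target node itself (the $+1$ in $L_i=d_T(\mathrm{root},i)+1$), since a distribution concentrated at the root would otherwise violate it. An alternative route that avoids Kraft altogether inducts on the tree, writing $\epl(\bar{p},T)=1+P_L\,\epl_L+P_R\,\epl_R$ for the root probability mass and the two subtree masses $P_L,P_R$, and using the entropy grouping/chain-rule identity (the equality form underlying \eqref{decompos1}--\eqref{decompos2}) together with $H(p_r,P_L,P_R)\le\log_2 3$; this reproduces the same constant while relying only on properties already stated in the preliminaries.
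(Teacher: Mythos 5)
Your proposal is correct and takes essentially the same route as the paper's own argument, which appears in the appendix as the proof of the $\Delta$-ary generalization (Lemma~\ref{lem:DeltaregularH}): a Kraft-type inequality in base $\Delta+1$ (here $3$, reflecting the left/right/stop branching) followed by Gibbs' inequality, the only cosmetic difference being that you bound $\sum_i 3^{-L_i}\le 1$ by a direct geometric-series count while the paper pads the tree with zero-frequency leaves and proves the corresponding sum equals $1$ exactly by induction. Your normalization caveat is also faithful to the source: the paper's appendix derivation likewise concludes $H_\Delta(\bar{p})\le \log(\Delta+1)\,(\epl(\bar{p},T)+1)$, so the stated constant indeed presupposes the comparison-count (depth-plus-one) convention, the additive $+1$ being absorbed by the asymptotic $\Omega(\cdot)$ in which the theorem is applied.
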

Namely, the entropy $H(\bar{p})$, is a lower bound on the expected path length from the root to the nodes in the tree.
For higher degree graphs, we can extend the result:
\begin{lemma}
\label{lem:DeltaregularH}
Let $H_\Delta(\bar{p})$ be the entropy (calculated using the logarithm of base $\Delta$) of frequency distribution $\bar{p} = (p_1, p_2, \ldots ,p_n)$.
Let $T$ be an optimal $\Delta$-ary search tree built over the above frequency distribution.
Then, $\epl(\bar{p},T) \ge \frac{1}{\log(\Delta+1)}H_\Delta(\bar{p})$.
\end{lemma}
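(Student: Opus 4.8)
The plan is to generalize the binary argument behind Theorem~\ref{th:sseplH} to a $(\Delta+1)$-ary alphabet, proving the bound for \emph{every} $\Delta$-ary search tree $T$ (which in particular covers the optimal one, since the claim is a lower bound). The key structural observation is that locating an item in a $\Delta$-ary search tree is a decision process with exactly $\Delta+1$ outcomes at each visited node: either the item is stored at that node (``found''), or the search descends into one of its at most $\Delta$ subtrees. This is precisely the $\Delta$-ary analogue of the ternary ``less/equal/greater'' comparison that produces the $\log 3$ factor in the binary case.

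I would first turn this decision process into a code. Map each item $i$, stored at depth $d_i := d_T(\mathrm{root},i)$, to the word over a $(\Delta+1)$-letter alphabet recording the decisions taken by the search that reaches it: its first $d_i$ letters are ``descend'' moves (each naming one of the $\le\Delta$ children), and its final letter is the distinguished ``found/stop'' symbol, so the codeword has length $\ell_i = d_i + 1$. The crucial point is that this code is prefix-free: because the stop symbol never appears as a descend move, the codeword of an item sitting at an internal node cannot be a prefix of the codeword of any item in one of its subtrees. Kraft--McMillan then yields $\sum_i (\Delta+1)^{-\ell_i} \le 1$.

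From here I would invoke the standard source-coding lower bound (the Gibbs/Shannon inequality, \cite{cover2012elements}): any $(\Delta+1)$-ary prefix-free code satisfies $\sum_i p_i \ell_i \ge H_{\Delta+1}(\bar p)$. Reading the access cost as the number of visited nodes (comparisons), $\ell_i = d_i+1$ --- the convention under which the restated Theorem~\ref{th:sseplH} of \cite{Mehlhorn75} is tight in the binary case --- this gives $\epl(\bar p,T) \ge H_{\Delta+1}(\bar p)$. A base change finishes the proof: since $H_{\Delta+1}(\bar p) = \tfrac{\log \Delta}{\log(\Delta+1)}\,H_{\Delta}(\bar p)$ and $\Delta \ge 2$ forces $\log\Delta \ge 1$, we obtain $\epl(\bar p,T) \ge \tfrac{1}{\log(\Delta+1)} H_{\Delta}(\bar p)$, as claimed; for $\Delta=2$ this recovers Theorem~\ref{th:sseplH} exactly.

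The main obstacle is the bookkeeping around the terminal ``stop'' symbol and the associated cost convention. Whether one counts edges or nodes along the search path shifts the estimate by an additive constant, so obtaining the clean factor $\tfrac{1}{\log(\Delta+1)}$ --- rather than a bound with a stray $-1$ --- hinges on measuring cost in comparisons, matching the base case. The other delicate point is justifying prefix-freeness rigorously given that items occupy internal nodes; the whole argument rests on the fact that an alphabet of size $\Delta+1$ (not $\Delta$) is exactly what the ``found-or-descend'' branching requires, which is also where the final base conversion and the degree bound $\Delta\ge 2$ combine to produce the stated constant.
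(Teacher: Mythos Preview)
Your argument is essentially the paper's own proof recast in coding-theoretic language: where the paper pads $T$ with zero-weight leaves, verifies by induction that $L=\sum_i(\Delta+1)^{-(b_i+1)}+\sum_j(\Delta+1)^{-a_j}=1$, and then applies Gibbs' inequality, you reach the same Kraft sum via the prefix-freeness of your $(\Delta+1)$-ary ``descend/stop'' code and then invoke the source-coding lower bound (which is Gibbs). Both derivations land at $H_\Delta(\bar p)\le \log(\Delta+1)\bigl(\epl(\bar p,T)+1\bigr)$, so your caveat about the node-versus-edge convention and the stray $-1$ matches exactly what happens in the paper's appendix proof.
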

The proof almost directly follows from the proof of Theorem~\ref{th:sseplH} in~\cite{Mehlhorn75}, by extending  properties of binary trees to
$\Delta$-ary trees, see the appendix for details.
%
%
We now prove the lower bound.

\begin{proof}[Proof of Theorem \ref{th:lowerboundBND}]
The proof idea is to consider a network which is the union 
of the optimal
trees, one for each individual node. While the resulting network may
violate the degree requirement, it constitutes a valid lower bound.
So let us start by finding an optimal structure for each source
node $i$, according to its communications to all destination nodes from it, $\vec{\cD}[i]$.
Towards this end, we construct $n$ $\Delta$-ary  trees,
and let $T_\Delta^i$ be the optimal tree for source
node $i$ build using $\vec{\cD}[i]$.
From Lemma~\ref{lem:DeltaregularH}, we have:
\begin{eqnarray*}
\epl(\vec{\cD}[i],T_\Delta^i)&=\sum_{j=1}^{n} p(j|i)d_{T_\Delta^i}(i,j)
\\ &=\Omega (H_\Delta(Y \mid X=i))
\end{eqnarray*}

\noindent where $\epl(\vec{\cD}[i],T_\Delta^i)$ denotes the
expected path length of $T_\Delta^i$ given $\vec{\cD}[i]$ and $d_{T_\Delta^i}$ denotes the shortest path in $T_\Delta^i$. 
Now consider any bounded degree network $\dang_\Delta$ and compare it to
the forest $T$ made up of $n$ trees $T_\Delta^1, T_\Delta^2,\ldots,T_\Delta^n$. Then,

\begin{align*}
\epl(\cD,\dang_\Delta) &= \sum_{i=1}^{n} p(i) \cdot \epl(\vec{\cD}[i],\dang_\Delta)\\
&  \ge  \sum_{i=1}^{n} p(i) \cdot \epl(\vec{\cD}[i],T_\Delta^i) \\
&\ge\sum_{i=1}^{n} p(i) \cdot H_\Delta(Y \mid X=i) \\ & =\Omega (H_\Delta(Y|X))
\end{align*}
Similarly, we can consider a set of trees optimized toward the incoming communication of node $j$, $\cev{\cD}[j]$, and the marginal destination probability. We show:
$$
\epl(\cD,\dang_\Delta) \ge \Omega (H_\Delta(X \mid Y))
$$
Hence the theorem follows.
\end{proof}


\section{Network Design for Sparse Distributions}\label{sec:sparse}

We now present families of distributions which enable
\dan s that match the lower bound. Our approach will be based on replacing neighborhoods with near optimal binary (or $\Delta$-ary) trees.
Following the lower bound of Lemma \ref{lem:DeltaregularH}, it is easy to show a matching upper bound (for a constant $\Delta$).

\begin{lemma}\label{epl_tree}
Let $\bar{p}$ be a probability distribution on a set of node destinations (sources)  $V$, and let $u$ be a single source (destination) node.
Then one can design a tree $T$ with $u$ as a root node with degree one, connected to a $\Delta$-ary tree over $V$ such that the expected
path length from $u$ to all destinations (or from all sources to $u$) is:
\begin{align}
\epl(\bar{p},T) = \sum_i p_i \cdot d_T(u,i) \le O(H_{\Delta}(\bar{p}))
\end{align}
\end{lemma}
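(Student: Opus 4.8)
The plan is to place the nodes of $V$ greedily into a complete $\Delta$-ary tree in order of decreasing probability and then attach $u$ as a degree-one root, so that the construction matches the structural form required by the statement. This is the constructive, upper-bound counterpart to the coding-theoretic lower bound of Lemma~\ref{lem:DeltaregularH}: a $\Delta$-ary tree is exactly a $\Delta$-ary prefix code, and we aim for the analogue of the Shannon bound ``expected length $\le H_\Delta + O(1)$''. The single inequality that drives everything is that, when the probabilities are sorted in nonincreasing order, the $k$-th largest value is at most $1/k$; this converts a purely combinatorial depth (position in the tree) into the information-theoretic quantity $\log_\Delta(1/p_{(k)})$.

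Concretely, first I would sort the masses as $p_{(1)} \ge p_{(2)} \ge \cdots \ge p_{(n)}$ (items with $p_i=0$ may be discarded, as they contribute to neither side). I would then fill a complete $\Delta$-ary tree in breadth-first order, assigning the item with the $k$-th largest probability to the $k$-th BFS slot. Since level $\ell$ of such a tree holds $\Delta^\ell$ nodes, the item in slot $k$ lies at depth at most $\log_\Delta k + O(1)$. This already produces a genuine $\Delta$-ary tree whose vertex set is exactly $V$, with every node (internal and leaf) carrying one element, which is precisely the ``$\Delta$-ary tree over $V$'' demanded by the lemma.

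Next I would bound the weighted depth. Because the $p_{(k)}$ are nonincreasing and sum to one, the top $k$ masses satisfy $k\,p_{(k)} \le \sum_{j\le k} p_{(j)} \le 1$, hence $p_{(k)} \le 1/k$ and $\log_\Delta k \le \log_\Delta(1/p_{(k)})$. Summing the per-item depths against the masses gives
$$\sum_{k} p_{(k)}\big(\log_\Delta k + O(1)\big) \le \sum_{k} p_{(k)} \log_\Delta \frac{1}{p_{(k)}} + O(1) = H_\Delta(\bar{p}) + O(1).$$
Finally, joining $u$ to the root of this tree by a single edge makes $u$ a root of degree one and increases every distance $d_T(u,i)$ by exactly one, so $\epl(\bar{p},T) \le H_\Delta(\bar{p}) + O(1) = O(H_\Delta(\bar{p}))$, and the symmetric ``sources to $u$'' statement follows by reading $\bar{p}$ as the distribution over sources.

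The step needing the most care is the bookkeeping hidden in the $O(\cdot)$ notation: the construction yields an \emph{additive} $+O(1)$, which is absorbed into $O(H_\Delta(\bar{p}))$ only in the regime $H_\Delta(\bar{p}) = \Omega(1)$ (when $\bar{p}$ is concentrated on a single node the true cost is a constant while $H_\Delta=0$, so the bound should be read as $O(H_\Delta(\bar{p})+1)$). The remaining technical points --- the exact depth-versus-position formula in a complete $\Delta$-ary tree and the degree of the tree root once $u$ is attached --- are routine; if a sharper constant were desired one could replace the greedy BFS placement by an optimal $\Delta$-ary (Huffman) tree, but the greedy construction already suffices for the asymptotic bound.
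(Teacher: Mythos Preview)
Your proof is correct and takes a genuinely different route from the paper. The paper's argument is a one-liner: build a $\Delta$-ary Huffman code over $\bar p$ (whose expected codeword length is at most $H_\Delta(\bar p)+1$), interpret it as a rooted $\Delta$-ary tree, and then push the elements of $V$ up from the leaves into the internal nodes, which can only shorten paths; attaching $u$ above the root finishes the job.

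Your greedy BFS placement with the inequality $p_{(k)}\le 1/k$ is more elementary and entirely self-contained: it does not appeal to the Huffman algorithm or its optimality, and it directly produces a $\Delta$-ary tree whose vertices \emph{are} the elements of $V$, so the ``move leaves up to internal nodes'' step is unnecessary. The trade-off is that Huffman gives the sharper additive constant (and is optimal among prefix trees), whereas your approach gives a slightly looser $+O(1)$ but with a cleaner, explicit construction. Both deliver $H_\Delta(\bar p)+O(1)$, and both share the same caveat you correctly flagged: the $O(H_\Delta(\bar p))$ in the statement should really be read as $O(H_\Delta(\bar p)+1)$, since for a point mass the entropy vanishes while the cost is a positive constant.
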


\begin{proof}
The proof follows by designing a Huffman $\Delta$-ary code over $\bar{p}$ (with expected code length less than $H_{\Delta}(\bar{p}) + 1$~\cite{cover2012elements}) and using it to build a rooted $\Delta$-ary tree.
While the nodes in the Huffman code are tree leaves,
we can move them up to become internal nodes, which only improves the
expected path length.
\end{proof}

\subsection{Tree Distributions}

A most fundamental class of distributions for which we can construct optimal network designs is based on trees.

\begin{theorem}
\label{lem:treetransform}
Let $\cD$ be a communication request distribution such that $G_{\cD}$ is a tree (i.e., ignoring the edge direction, $G_{\cD}$ forms a tree).
Let $X$, $Y$ be the random variables of the sources and destinations in $\cD$, respectively.
Then, it is possible to generate a  \dan~$\dang$ with maximum degree $8$,
such that
\begin{align*}
\epl(\cD,\dang) \le O(H(Y \mid X) + H(X \mid Y))
\end{align*}
This is asymptotically optimal.
\end{theorem}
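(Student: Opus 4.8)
The plan is to exploit the tree structure of $G_{\cD}$ to partition the demand into pieces each of which can be replaced by a near-optimal rooted $\Delta$-ary tree via Lemma~\ref{epl_tree}, and then to bound the total maximum degree by a constant by reusing each node in only a bounded number of such local structures. Concretely, for each node $i$ I would look at its ``star'' in the demand graph: since $G_{\cD}$ (ignoring direction) is a tree, each node $i$ together with its tree-neighbors forms a local neighborhood, and the outgoing row $\vec{\cD}[i]$ and incoming column $\cev{\cD}[j]$ describe how frequently $i$ talks to those neighbors. The key idea is to route all communication along the tree structure but to shortcut each local ``star'' by a balanced $\Delta$-ary (Huffman) tree rather than keeping the possibly high-degree star intact.

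\textbf{Construction.} For every node $u$, I would apply Lemma~\ref{epl_tree} twice: once to the distribution $\vec{\cD}[u]$ of its outgoing traffic to build a rooted $\Delta$-ary tree $T^{out}_u$ with $u$ at the root, and once to $\cev{\cD}[u]$ for incoming traffic to build $T^{in}_u$. Each such tree has root-degree one and internal degree at most $\Delta$ (I would fix $\Delta$ to a small constant such as $2$, so that the internal degree plus the parent edge stays bounded). The final host network $\dang$ is the union of all these local trees, overlaid so that adjacent stars in the tree $G_{\cD}$ share their common endpoints. Because $G_{\cD}$ is a tree, each edge $\{u,v\}$ of $G_{\cD}$ belongs to exactly two stars (the one centered at $u$ and the one centered at $v$), so each node participates as the root of $O(1)$ trees and as a leaf in $O(1)$ neighboring trees; this is what keeps the total degree bounded by the claimed constant $8$. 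The expected path length between a communicating pair $(u,v)$ that are tree-neighbors is then the depth of $v$ in $T^{out}_u$ (plus $v$'s depth in the incoming structure), which by Lemma~\ref{epl_tree} contributes at most $O(H_\Delta(\vec{\cD}[u]))$ when summed over the row, and symmetrically for columns.

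\textbf{Analysis.} Summing over all nodes and weighting by the source/destination marginals, I would obtain
\begin{align*}
\epl(\cD,\dang) \le \sum_i p(i)\, O(H(\vec{\cD}[i])) + \sum_j q(j)\, O(H(\cev{\cD}[j])),
\end{align*}
and by the identities recalled in the preliminaries, $\sum_i p(i)H(\vec{\cD}[i]) = H(Y\mid X)$ and $\sum_j q(j)H(\cev{\cD}[j]) = H(X\mid Y)$, giving the claimed $O(H(Y\mid X)+H(X\mid Y))$ bound. The matching lower bound is immediate from Theorem~\ref{th:lowerboundBND}, which already establishes $\bnd(\cD,\Delta)\ge\Omega(\max(H_\Delta(Y\mid X),H_\Delta(X\mid Y)))$ and hence $\Omega(H(Y\mid X)+H(X\mid Y))$ up to constants, so the design is asymptotically optimal.

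\textbf{Main obstacle.} The delicate part will be the degree accounting rather than the path-length bound. A node $u$ is simultaneously the root of its own outgoing and incoming trees and a leaf (or internal node) inside the trees of each of its tree-neighbors; since a node may have high degree in $G_{\cD}$, I must ensure that $u$'s role inside each neighbor's $\Delta$-ary tree costs it only a bounded number of incident host edges, and that the sharing of common endpoints when overlaying adjacent stars does not blow up the degree. The argument that the acyclic (tree) structure of $G_{\cD}$ limits each node's participation to a constant number of local trees is precisely what pins the maximum degree at the constant $8$, so verifying that bound carefully—and checking that the overlaid union is still connected and realizes the stated shortest paths—is where the real work lies.
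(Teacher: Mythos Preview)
Your overall strategy---replace each local star by a near-optimal binary tree via Lemma~\ref{epl_tree} and sum---is the right one, and matches the paper's. But the degree accounting, which you correctly flag as the delicate part, does not go through as written.

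The flaw is in the sentence ``each node participates as the root of $O(1)$ trees and as a leaf in $O(1)$ neighboring trees.'' A node $u$ of degree $d$ in $G_{\cD}$ is the root of its own two trees, yes, but it appears inside $T^{out}_v$ and $T^{in}_v$ for \emph{every} neighbor $v$---that is $2d$ trees, not $O(1)$. Concretely, take $G_{\cD}$ to be a star with center $c$ and leaves $\ell_1,\dots,\ell_{n-1}$. For each leaf $\ell_i$, the distribution $\vec{\cD}[\ell_i]$ is supported on the single point $c$, so $T^{out}_{\ell_i}$ is just the edge $\{\ell_i,c\}$. Taking the union over all $i$ gives $c$ degree $n-1$ in $\dang$, not a constant. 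The fact that each \emph{edge} lies in two stars does not bound how many stars a \emph{node} lies in.

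The paper fixes this with one extra idea you are missing: root the tree $G_{\cD}$ arbitrarily and build the local binary trees only \emph{downward}, from each node to its \emph{children} (using the restricted distributions $\vec{c}_i$, $\cev{c}_i$ over children rather than the full rows and columns). Then every node $i$ has a unique parent $\pi(i)$, so besides being root of its own $\vec{B}_i$ and $\cev{B}_i$ (degree $1$ each), it appears in exactly the two trees $\vec{B}_{\pi(i)}$ and $\cev{B}_{\pi(i)}$ (degree at most $3$ each), for a total of at most $8$. The entropy bound still works because removing the single parent coordinate from $\vec{\cD}[i]$ only decreases entropy (the grouping inequality~\eqref{decompos2} gives $(1-p^\pi_i)H(\vec{c}_i)\le H(\vec{\cD}[i])$), so $\sum_i p(i)\,\epl(\vec{c}_i,\vec{B}_i)\le O(H(Y\mid X))$ and similarly for the incoming side. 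Once you add the rooting step and restrict to children, your argument becomes exactly the paper's.
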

\begin{proof}
We generate $\dang$ as follows. Consider an arbitrary node as the root of the tree $G_{\cD}$, call this tree $T_{\cD}$,
and consider the parent-child relationship implied by the root.
Let $\pi(i)$ denote the parent of node $i$.
Let $\vec{c}_i$ denote the communication distribution \emph{from} $v_i$ to its children (not including its single parent) and $\vec{\cD}[i]$ denote the communication distribution \emph{from} $i$ to its neighbors (children and parent). Let $p^{\pi}_i =\vec{\cD}[i][\pi(i)]$ denote the corresponding entry in $\vec{\cD}[i]$ for the parent of $i$.
From entropy Eq. \eqref{decompos2}, we have that $(1-p^{\pi}_i)H(\vec{c}_i) \le H(\vec{\cD}[i])$.
Similarly we define $\cev{c}_i$ and $\cev{\cD}[i]$ as the communication distribution \emph{to} $v_i$, from its children and neighbors respectively.

The construction has two phases. In the first phase we replace outgoing edges.
For each node $i$ replace the edges between $i$ and its \emph{children} with a binary tree according to $\vec{c}_i$ and the method of \cite{Mehlhorn75} (or Lemma \ref{epl_tree} for a general $\Delta$) for creating a near optimal binary tree. Let $\vec{B}_i$ denote this tree and recall that $\epl(\vec{c}_i, \vec{B}_i) \leq O(H(\vec{c}_i))$.
Note that every node $i$ may appear in at most two trees $\vec{B}_i$ and $\vec{B}_{\pi(i)}$; 
in $\vec{B}_i$ its degree is one and in $\vec{B}_{\pi(i)}$ its degree is at most 3, 
so the outgoing degree of each node is at most 4 after this phase.

In the second phase we take care of the remaining incoming edges from children to parents.
For each node $i$ replace the edges from its \emph{children} to it with a binary tree according to $\cev{c}_i$ and the method of \cite{Mehlhorn75} for creating a near optimal binary tree. Let $\cev{B}_i$ denote this tree and recall that $\epl(\cev{c}_i, \cev{B}_i) \leq O(H(\cev{c}_i))$.
Note that every node $i$ may appear in at most two trees $\cev{B}_i$ and  $\cev{B}_{\pi(i)}$; in $\cev{B}_i$ $i$'s degree is one and in $\cev{B}_{\pi(i)}$
$i$'s degree is at most 3. Thus, the incoming degree of each node is bounded by 4 after this phase.

Now we bound $\epl(\cD,\dang)$ by bounding the expected path lengths in the corresponding binary trees of each node,
first considering all edges from parent to children and then all edges from children to parent. Let $p(i)$ and $q(i)$ denote the probabilities that node $i$ will be
 a source and a destination of a communication request, respectively. Then:
\begin{align*}
\epl(\cD,\dang) &\leq \sum_{(u,v) \in \cD} p(u,v) d_{\dang}(u,v) \\
&=\sum_{(\pi(i),i) \in T_{\cD}} p(\pi(i),i) d_{\dang}(\pi(i),i) \\ & + \sum_{(i, \pi(i)) \in T_{\cD}} p(i, \pi(i)) d_{\dang}(i, \pi(i)) \\
&= \sum_{i=1}^n p(i) \epl(\vec{c}_i, \vec{B}_i) \\ & + \sum_{i=1}^n q(i) \epl(\cev{c}_i, \cev{B}_i) \\
&\leq \sum_{i=1}^n p(i) H(\vec{\cD}[i]) \\ & + \sum_{i=1}^n q(i) H(\cev{\cD}[i]) \\ & = H(Y \mid X) + H(X \mid Y)
\end{align*}

This matches our lower bound in Theorem~\ref{th:lowerboundBND}.
\end{proof}

\subsection{General Sparse Distributions}\label{sec:trafo}

Asymptotically optimal demand-aware networks can even be designed for general sparse distributions.

\begin{theorem}
\label{lem:sparsetransform}
Let $\cD$ be a communication request distribution where  $\avgd$  is the average degree in  $G_{\cD}$  (so the number of edges $m=\frac{\avgd \cdot n}{2}$).
Let $X$, $Y$ be the random variables of the sources and destinations in $\cD$, respectively.
Then, it is possible to generate a  \dan~$\dang$ with maximum degree $12\avgd$,
such that
\begin{align}
\epl(\cD,\dang) \le O(H(Y \mid X) + H(X \mid Y))
\end{align}
This is asymptotically optimal when $\avgd$ is a constant.
\end{theorem}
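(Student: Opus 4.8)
The plan is to generalize the two-phase tree construction of Theorem~\ref{lem:treetransform} and then add a degree-reduction step, which is the technical heart of the result. First I would build, for every node $i$, a near-optimal outgoing routing tree $\vec{B}_i$ rooted at $i$ over the full distribution $\vec{\cD}[i]$, and a symmetric incoming tree $\cev{B}_i$ over $\cev{\cD}[i]$, using the Huffman / $\Delta$-ary construction of Lemma~\ref{epl_tree}. A request $(i,j)$ is then served by descending $\vec{B}_i$ to the position representing $j$ and ascending $\cev{B}_j$ back to $j$, a path of length at most $\log\frac{1}{p(j\mid i)}+\log\frac{1}{p(i\mid j)}+O(1)$. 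Summing this against $\cD$ (grouping the two sums by source and by destination exactly as in Theorem~\ref{lem:treetransform}) gives $\sum_i p(i)H(\vec{\cD}[i])+\sum_j q(j)H(\cev{\cD}[j]) = H(Y\mid X)+H(X\mid Y)$ up to an additive constant, so $\epl(\cD,\dang)\le O(H(Y\mid X)+H(X\mid Y))$. The subtlety absent in the tree case is the degree: done naively, a node of large demand-degree would appear in the trees of all its neighbors and inherit degree as large as $\maxd$, which may far exceed $\avgd$.

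The observation I would exploit is that, although individual degrees can be large, the demand graph has only $m=\avgd n/2$ edges, so the total size of all the gadgets $\{\vec{B}_i,\cev{B}_i\}$ is only $O(\avgd n)$. I would therefore not force each node to host its own neighborhood, but realise the internal (relay) positions of these logarithmic-depth trees using the \emph{entire} vertex set $V$: each node $i$ occupies only the root of its two gadgets $\vec{B}_i,\cev{B}_i$ (contributing $O(1)$ to its degree regardless of $\deg(i)$), while each demand edge $(i,j)$ is implemented by a shared relay vertex that is a leaf of $\vec{B}_i$ and of $\cev{B}_j$ simultaneously. Because the internal and leaf-relay roles are \emph{fungible} -- any vertex of $V$ may play them -- the $O(\avgd n)$ relay roles can be spread over the $n$ vertices by a greedy balancing argument so that each vertex absorbs $O(\avgd)$ roles; after accounting for the constant internal degree of a binary gadget node plus the two root duties, this yields maximum degree at most $12\avgd$. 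Intuitively, high-degree nodes are served by routing \emph{into their own aggregation gadgets} through relay vertices with spare capacity, and sparsity guarantees that this spare capacity is globally sufficient (e.g.\ the isolated vertices of a clique-plus-isolated instance do exactly this job).

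Since sharing relays among gadgets only creates additional shortcuts and never lengthens the designated descend-then-ascend paths, the expected-path-length bound from the first step is preserved verbatim, and Theorem~\ref{th:lowerboundBND} supplies the matching lower bound $\Omega(H(Y\mid X)+H(X\mid Y))$, establishing asymptotic optimality whenever $\avgd$ is constant.

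I expect the \textbf{main obstacle} to be precisely this degree-reduction step: bounding the maximum degree by the \emph{average} rather than the \emph{maximum} demand degree while keeping the expected path length within a constant factor of the conditional entropy. A node communicating with $\deg(i)$ partners must still reach all of them along short paths, so the degree cannot simply be dropped; the resolution is to aggregate each node's traffic inside its own gadget and interface communicating gadgets through shared relays, so that no node is ever forced to high degree, coupled with a global counting argument that converts the $O(\avgd n)$ total relay load into an $O(\avgd)$ per-vertex bound. Pinning down the exact constant and verifying that the shared-relay embedding is consistent -- valid trees, correct Huffman depths, and no interfering multiplicities -- is the delicate part of the argument.
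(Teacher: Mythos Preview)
Your proposal is sound and would yield the theorem, but it takes a genuinely different route from the paper's proof.

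The paper does \emph{not} build trees for every node. Instead it classifies nodes: the $n/2$ lowest-degree nodes (each of degree at most $2\avgd$) are ``low'', and a node with out-degree (resp.\ in-degree) exceeding $2\avgd$ is ``high out-degree'' (resp.\ ``high in-degree''). Phase~1 subdivides every directed edge $(u,v)$ with $u$ high-out and $v$ high-in through a low-degree helper $\ell$, balancing the at most $m$ such edges so each low node helps $\le\avgd$ times; this yields a modified matrix $\cD'$ with $H(\vec{\cD'}[i])\le H(\vec{\cD}[i])$ by the grouping inequality~\eqref{decompos1}. Phase~2 then builds a near-optimal binary tree only at each high out-degree (resp.\ in-degree) node over $\vec{\cD'}[i]$ (resp.\ $\cev{\cD'}[j]$). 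Low--low edges remain direct in $\dang$. The degree bound $12\avgd$ falls out of a short case analysis (a low helper incurs $\le 6$ per help and already had $\le 2\avgd$ neighbours, etc.), and the $\epl$ bound is a three-term sum: direct edges contribute $1$, and the two tree families contribute $H(Y\mid X)$ and $H(X\mid Y)$.

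By contrast, you treat all nodes symmetrically: every node gets both an in- and an out-tree, every demand edge is routed through two trees joined at a shared relay leaf, and \emph{all} internal positions of all trees are realised by arbitrary vertices balanced globally. Your argument is cleaner conceptually (no case split, no modified matrix $\cD'$, no appeal to the grouping property), and the $\epl$ bound is a direct two-term sum of Huffman expected depths. What you pay for this uniformity is exactly the embedding-consistency verification you flag as delicate: you must argue that the $O(\avgd n)$ virtual positions can be assigned to $V$ without collision problems and with the prescribed leaf-sharing constraint $r_{ij}\in\vec B_i\cap\cev B_j$. The paper sidesteps this entirely because only low-degree nodes are ever reused, and only as edge subdividers, so the tree structures are simply subgraphs over the already-modified $G'$. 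One minor imprecision in your write-up: the per-edge depth bound $\log\frac{1}{p(j\mid i)}+O(1)$ holds for Shannon--Fano but not necessarily for Huffman codewords; since you only need the \emph{expected} depth bound from Lemma~\ref{epl_tree}, this does not affect your final inequality.
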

\begin{proof}
Recall that $G_{\cD}$ (for short $G$) is a directed graph and define in-degree and out-degree in the canonical way.
Let the (undirected) degree of a node, be the sum of its in-degree and out-degree and denote the average degree as $\avgd$.
Denote the $n/2$ nodes with the lowest degree in $G$ as \emph{low degree} nodes and the rest as \emph{high degree} nodes.
Note that each low degree node has a degree at most $2\avgd$ and both its in-degree and out-degree must be low.
A node with out-degree (in-degree) larger than $2\avgd$ is called a \emph{high out-degree} (\emph{high in-degree}) node (some nodes are neither low or high degree nodes).

The construction of $\dang$ will be done in two phases. In the first phase, we consider only (directed) edges $(u,v)$ between
a high out-degree $u$ and a high in-degree node $v$.
We subdivide each such edge with two edges that connect $u$ to $v$ via a helping low degree node $\ell$, i.e., removing the directed edge $(u,v)$ and adding the edges $(u,\ell)$ and $(v,\ell)$. Note that there are at most $m$ such edges,
 so we can distribute the help between low degree nodes in such a way
 that each low degree node helps at most $\avgd$ such edges. Call the resulting graph $G'$.

Accordingly, we also create a new matrix $\cD'$ which, initially, is identical to $\cD$. Then for each $(u,v)$ and $\ell$ as above
we set $\cD'(u,v)=0$, $\cD'(u,\ell) = \cD(u,l)+ \cD(u,v)$ and $\cD'(\ell,v) = \cD(l,v)+ \cD(u,v)$.
Note that $\cD'$ is not a distribution matrix anymore, as the sum of all the entries is more than one,  but it has the following property:
For each high degree node $i$, we have $H(\vec{\cD'}[i]) \le H(\vec{\cD}[i])$ and $H(\cev{\cD'}[i]) \le H(\cev{\cD}[i])$ (see Eq. \eqref{decompos1}).


In the second phase, we construct $\dang$ from $G'$. Consider each node $i$ with high out-degree and create a nearly optimal binary tree $\vec{B}^i$ according to $\vec{\cD'}[i]$ using the method of \cite{Mehlhorn75}. Add an edge from $i$ to the root of $\vec{B}^i$ and delete all the out-edges from $i$ from $G'$. Similarly consider each node $j$ with high in-degree and create a nearly optimal binary tree $\cev{B}_j$ according to ${\cD'}[i]$ using the method of \cite{Mehlhorn75}. Add an edge from $j$ to the root of $\cev{B}_j$ and delete all the in-edges of $j$ from $G'$. This completes the construction of $\dang$.

We first bound the maximum degree in $\dang$.
First consider a low degree node $\ell$, helping an edge $(u,v)$, i.e.,
$u$ is high out-degree and $v$ is high-indegree.
So $\ell$ is part of both $u$'s and $v$'s binary tree, 
hence $\ell$'s
degree increases by at most 6 (two times degree 3 for being an internal node).
Note that $\ell$ needs to help at most $\avgd$ edges itself. 
For each of these $\avgd$ edges, $\ell$'s degree will be 
at most $6$, resulting in a degree of $6\avgd$.
Since $\ell$'s degree was at most $2\avgd$, 
in the worst case, $\ell$ was associate with
$2\avgd$ high in-degree or out-degree nodes, i.e.,
$\ell$ will be present in all these $2\avgd$ trees,
which results in another $6\avgd$ degrees for $\ell$.
In total, $\ell$'s degree is $12\avgd$.
If a node $h$ has both high out-degree and high in-degree, 
then its degree will be two: $h$ is connected to the root
of the tree created of its out-edges and to the root
of the tree created of its in-edges.
If a node $u$ is only a high out-degree node, its degree in $\dang$ is bounded by $6\avgd+1$ (and similarly for a node $u$ which is only  a high in-degree node).
If a node is neither high nor low degree, then its degree in $\dang$ is bounded by $12\avgd$ (originally it was up to $4\avgd$ in $G'$).
We now bound $\epl(\cD,\dang)$.
Recall that from Lemma \ref{epl_tree} and Eq. \eqref{decompos1},  we have,
$$
\epl(\vec{\cD'}[i], \vec{B}_i) \le O(H(Y \mid X=i))
$$
and
$$
\epl(\cev{\cD'}[j], \cev{B}_j) \le O(H(X \mid Y=j))
$$

For each request $(i,j)$ in $\cD$ there are two possibilities for the route on $\dang$: either the edge $(i,j) \in \dang$ is a direct
route, or the route goes via $\vec{B}_i$ or $\cev{B}_j$ or both.
Let $\mathcal{O}$ and $\mathcal{I}$ be the set of high out-degree and in-degree nodes, respectively.
Then:
\begin{align*}
\epl(\cD,\dang) &= \sum_{(u,v) \in \cD} p(u,v) d_{\dang}(u,v) \\
&\le \sum_{ (i,j) \notin \mathcal{O} \cup \mathcal{I}} p(u,v)  \\ & + \sum_{i \in \mathcal{O}} p(i) \epl(\vec{\cD}[i]), \vec{B}_i) \\ & +\sum_{j \in \mathcal{I}} q(j) \epl(\cev{\cD}[j]) \cev{B}_j)\\
&= \sum_{i\notin \mathcal{O}}p(i) + \sum_{j \notin \mathcal{I}} q(j) \\ & +\sum_{i \in \mathcal{O}} p(i) \epl(\vec{\cD}[i]), \vec{B}_i) \\ &  +\sum_{j \in \mathcal{I}} q(j) \epl(\cev{\cD}[j]) \cev{B}_j)\\
&\leq O(H(X \mid Y)+H(Y \mid X))
\end{align*}

This matches our lower bound in Theorem~\ref{th:lowerboundBND}.
\end{proof}

\section{Regular and Uniform Distributions}\label{sec:reguni}

Another large family of distributions for which demand-aware networks can
be designed are regular and uniform distributions $\cD$.
While it is easy to see that
both conditions can be relaxed such that the supported distributions can be
``\emph{nearly} regular'' and
``\emph{nearly} uniform'',
for ease of presentation,
we keep the conditions strict in what follows.

We first establish an interesting connection to spanners.
As we will see, this connection will provide a simple
and powerful technique to design a wide range of
demand-aware networks meeting the conditional entropy lower bound.

\begin{theorem}
\label{th:spanner2design}
Let $\cD$ be an arbitrary (possibly dense) regular and uniform
request distribution.
It holds that if we can find a constant and sparse (i.e., constant distortion, linear sized) spanner for $G_{\cD}$, we can design
a constant degree \dan~$\dang$ providing an expected
path length of
\begin{align}
\epl(\cD,\dang) \le O(H(Y \mid X) + H(X \mid Y))
\end{align}
This is asymptotically optimal.
\end{theorem}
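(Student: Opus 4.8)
The plan is to use the spanner as a sparse surrogate for the (possibly dense) demand graph $G_{\cD}$ and then reduce to the machinery already developed for sparse distributions. First I would record the special structure of regular and uniform distributions. If $G_{\cD}$ is $d$-regular and uniform, then for every source $i$ the outgoing distribution $\vec{\cD}[i]$ is uniform over exactly $d$ destinations, so $H(Y\mid X=i)=\log d$ and hence $H(Y\mid X)=H(X\mid Y)=\Theta(\log d)$. Thus the target bound $O(H(Y\mid X)+H(X\mid Y))$ is simply $\Theta(\log d)$, and it suffices to build a constant-degree $\dang$ with $\epl(\cD,\dang)=O(\log d)$.

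Second, I would use the spanner to obtain short routes for every demand pair. Every pair $(i,j)$ with $\cD[i,j]>0$ is an edge of $G_{\cD}$, hence $d_{G_{\cD}}(i,j)=1$, and the constant distortion $t$ gives $d_{S}(i,j)\le t$. Routing the demand on $S$ therefore already yields $\epl(\cD,S)\le t=O(1)$, which is within the target since $\log d\ge 1$. The only defect of $S$ is that, although it is linear-sized and hence of constant average degree $\avgd$, it may contain vertices of large degree.

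Third, I would convert $S$ into a constant-maximum-degree network by the degree-reduction technique of Theorem~\ref{lem:sparsetransform}: partition vertices into low- and high-degree classes, reroute each edge joining two high-degree endpoints through a low-degree helper node, and replace the incident edges of every high-degree vertex by a near-optimal balanced binary tree. Since $\avgd=O(1)$ for a linear-sized $S$, the resulting $\dang$ has maximum degree $O(\avgd)=O(1)$.

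Finally, I would bound the expected path length of $\dang$. For a demand pair $(i,j)$ the route in $\dang$ follows the spanner path $i=v_0,\dots,v_k=j$ with $k\le t$, each vertex $v_\ell$ expanded into a tree of depth $O(\log\deg_S(v_\ell))$, so $d_{\dang}(i,j)=O\!\big(\sum_\ell \log\deg_S(v_\ell)\big)$. I expect this to be the main obstacle: a naive bound only gives $O(t\cdot\log\Delta_{\max})$, which can be as large as $O(\log n)$. The crux is to control the degree-weighted load along spanner routes so that every tree encountered has depth $O(\log d)$; this is precisely where regularity and uniformity (equivalently, the guarantee that the relevant spanner degrees are polynomially bounded in $d$, e.g.\ a spanner of maximum degree $O(d)$) must be exploited, giving $d_{\dang}(i,j)=O(t\log d)=O(\log d)$ and hence $\epl(\cD,\dang)=O(\log d)$. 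Asymptotic optimality then follows immediately from the lower bound of Theorem~\ref{th:lowerboundBND}.
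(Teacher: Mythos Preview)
Your overall strategy matches the paper's: observe that for an $r$-regular uniform $\cD$ one has $H(Y\mid X)=H(X\mid Y)=\log r$, note that the constant-distortion spanner $S$ already gives $\epl(\cD,S)=O(1)$, and then degree-reduce $S$ to a constant-degree $\dang$ at the cost of a logarithmic blowup in distances.

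The one place where your write-up is muddled is the degree-reduction step. You invoke ``the technique of Theorem~\ref{lem:sparsetransform}'' and speak of ``near-optimal balanced'' trees, but Theorem~\ref{lem:sparsetransform} builds \emph{distribution-aware} (Huffman/Mehlhorn) trees from $\cD$ and only applies when $G_{\cD}$ itself is sparse; here $G_{\cD}$ is dense and there is no natural distribution on the edges of $S$. The paper instead uses the purely graph-theoretic Lemma~\ref{lemma:helper}, which replaces high-degree vertices of $S$ by \emph{balanced} binary trees and yields a uniform multiplicative guarantee $d_{\dang}(u,v)\le 2\log\Delta_{\max}(S)\cdot d_S(u,v)$ for all pairs. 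This collapses your ``main obstacle'' in one line: since $S$ is a subgraph of the $r$-regular $G_{\cD}$ (or, for a metric spanner, assumed to have max degree $O(r)$), we have $\Delta_{\max}(S)\le r$, hence $\epl(\cD,\dang)\le 2\log r\cdot\epl(\cD,S)=O(\log r)$. There is no need to sum $\log\deg_S(v_\ell)$ along individual routes; the single observation $\Delta_{\max}(S)\le r$ does all the work, and this is precisely (and only) where regularity of $G_{\cD}$ enters.
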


In other words, for regular and uniform distributions, the network
design problem boils down to finding a constant\footnote{To be precise,
a spanner with constant \emph{average} distortion will be sufficient (see Appendix for details). However,
for simplicity, we leave it as a constant
spanner.} sparse spanner:
as we will see, we can automatically transform this spanner
into an efficient network (which may contain auxiliary edges).
The remainder of this section is devoted to the proof of
the theorem.

Assume that $\cD$ is $r$-regular and uniform. Recall that in this case $H(Y \mid X) =  H(X \mid Y) = \log r$, so $\bnd(\cD,\Delta) \geq \Omega(H(Y \mid X))$ where $\Delta$ is a constant.
We now describe how to transform a constant sparse (but potentially irregular)
spanner for $G_{\cD}$ into a constant-degree host network $\dang$
with $\epl(\cD,\dang) \leq O(\log r)$. This will be done using a similar
degree reduction technique as discussed earlier (in the proof of Theorem~\ref{lem:sparsetransform}).

%


\begin{lemma}\label{lemma:helper}
Let $G$ be a graph of maximum degree $\maxd$
and an average degree $\avgd$.
Then, we can construct a graph
$G'$ with maximum degree $8\avgd$ which is a graph metric $\log \maxd$-spanner of $G$,
 i.e.,  $d_{G'}(u,v)\leq 2 \log \maxd \cdot d_{G}(u,v)$.
\end{lemma}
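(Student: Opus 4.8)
The plan is to reuse the degree-reduction technique from the proof of Theorem~\ref{lem:sparsetransform}, now in a purely metric, distribution-free setting. First I would partition $V$ into the $n/2$ nodes of smallest degree, the \emph{low} nodes, and the remaining \emph{high} nodes; since the degree sum equals $\avgd\, n$, a counting argument shows every low node has degree at most $2\avgd$. The construction of $G'$ then proceeds in two phases, exactly as before. In Phase~1 I subdivide every edge $(u,v)$ whose two endpoints are both high, routing it through a helper low node $\ell$, i.e. replacing $(u,v)$ by $(u,\ell)$ and $(\ell,v)$. Since there are at most $m=\avgd\, n/2$ such edges and $n/2$ available helpers, the helper duty can be distributed so that each low node helps at most $\avgd$ edges. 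The effect of Phase~1 is the key structural invariant: in the resulting graph every high node has \emph{only low neighbors}. In Phase~2 I replace, for each high node $v$, all of its incident edges by a \emph{balanced} binary tree $B_v$ spanning $v$ together with its (now exclusively low) neighborhood, rooted at $v$; its depth is at most $\log\maxd$. This is the analogue of replacing a neighborhood by a near-optimal tree, but here the tree is chosen to minimize depth rather than to match a distribution.

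For the distortion bound I would argue edge-by-edge along a shortest path of $G$. A both-low edge is kept verbatim (length $1$). An edge with exactly one high endpoint $v$ becomes a root-to-vertex path inside $B_v$, of length at most $\log\maxd$. A both-high edge was subdivided through some helper $\ell$, and in $G'$ is realized by a descent in $B_u$ to $\ell$ followed by an ascent in $B_v$ from $\ell$, of length at most $2\log\maxd$. Hence every edge of $G$ is stretched to a path of length at most $2\log\maxd$ in $G'$, and concatenating these along a shortest $u$--$v$ path gives $d_{G'}(u,v)\le 2\log\maxd\cdot d_{G}(u,v)$, so $G'$ is a graph metric $\log\maxd$-spanner in the stated sense.

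The main obstacle is the degree bound $8\avgd$, which is where the careful bookkeeping lives. Only low nodes accumulate degree, and I would split their load into three sources: the direct edges to surviving low neighbors, the membership of a low node as a vertex (degree at most $3$) in the tree $B_v$ of each of its high neighbors, and the helper duty incurred in Phase~1. Using that each low node had degree at most $2\avgd$ to begin with, hence lies in at most that many neighborhood-trees, and helps at most $\avgd$ subdivided edges (each placing it into two further trees), these contributions are each $O(\avgd)$; tracking the constants precisely -- in particular arranging that a helper appears as a \emph{leaf} rather than an internal vertex in the two trees it is forced into -- is what pins the maximum degree down to the claimed $8\avgd$. This accounting is the delicate part of the argument and the place I would spend the most care, since choosing the low/high threshold at the median degree is exactly what keeps both the number of trees a node joins and its helper load simultaneously $O(\avgd)$. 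It then remains only to check the two easy feasibility points used above: that at least $n/2$ helpers are available (guaranteed by the median split) and that a balanced binary tree over at most $\maxd$ vertices has depth at most $\log\maxd$.
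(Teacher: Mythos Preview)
Your proposal is correct and follows essentially the same approach as the paper: the same low/high median split, the same Phase~1 subdivision of high--high edges through low helpers, the same Phase~2 replacement of each high node's neighborhood by a balanced binary tree, and the same edge-by-edge distortion argument. The only minor deviation is your suggestion to force helpers to appear as \emph{leaves} in the two trees they join, which is a small refinement of the bookkeeping the paper does more loosely; either way the degree accounting lands at $8\avgd$.
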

\begin{proof}
Let us call the $n/2$ nodes with the lowest degree in $G$
the  \emph{low degree} nodes and the remaining nodes
\emph{high degree} nodes.
By the pigeon hole principle, each low degree node has a
degree at most $2\avgd$.
The construction of $G'$ proceeds in two phases.
In the first phase we take every edge between high degree nodes
$u,v$ and subdivide it with two edges that connect $u$ to $v$ via a
helping low degree node $\ell$, i.e., removing the edge $(u,v)$ and adding the edges
$(u,\ell)$ and $(v,\ell)$. Note that there are at most $m$ edges connecting high degree nodes so we can distribute the help between
low degree nodes such that each
low degree node helps to at most $\avgd$ such edges.

In the second phase we consider each high
degree node $u$ and replace the set of
edges between $u$ and its neighbors, $\Gamma(u)$,
with a balanced binary tree that connects $u$ as the root and
$\Gamma(u)$ as
remaining nodes of the tree. Denote as $B_u$ this tree and note that
the height of $B_u$ is at most $\log( \card{\Gamma(u)}+1)$.
We leave edges between low degree nodes as in $G$.

Let us analyze the degrees in $G'$. Since every high degree
node $u$ in $G'$ only connects to low degree nodes, it is only a
member of  $B_u$ and
its degree reduces to 2 in $G'$. Now consider a low degree node $\ell$:
for each edge $(u,v)$ it helps, $\ell$ participates in $B_u$ and $B_v$.
Hence, its degree increases by at most 6 in $G'$ compared to $G$. Overall, for helping high degree nodes, the degree of $\ell$ can increase by $6\avgd$.
Together with its original neighbors from $G$,
the degree of $\ell$ in $G'$ can be at most $8\avgd$.

Next consider the distortion of $G'$. The distortion between neighboring low degree nodes is one. The distortion between neighboring high degree nodes
is at most twice $\log \maxd$ and the distortion between a neighboring high and low degree is at most $\log \maxd$.

So, $d_{G'}(u,v)\leq 2 \log \maxd \cdot d_{G}(u,v)$ for all $u$, $v$ in $G'$.
\end{proof}

We will apply Lemma~\ref{lemma:helper} to prove Theorem \ref{th:spanner2design}.
\begin{proof}[Proof of Theorem \ref{th:spanner2design}]
Let $\spanner$ be a constant and sparse spanner of $G_{\cD}$ ($G$ could be either a subgraph or a metric spanner of max degree
asymptotically not larger than $G_{\cD}$) of degree at most $r$.
 Lemma~\ref{lemma:helper} then tells us how to transform $\spanner$ to
 a $\dan$  $\dang$ of degree $\avgd$.
Since $\spanner$ is a constant spanner there is a constant $c$ such that,
\begin{align}\label{eq:onaverage}
\epl(\cD,\spanner) = \sum_{(u,v) \in \cD} \p(u,v)\cdot d_{\spanner}(v,v) = c
\end{align}
Since $\spanner$ has maximum degree $r$,
according to Lemma~\ref{lemma:helper}, it has a graph metric spanner $\dang$ such that,
the distance of any source-destination pair of $G(\cD)$ in $\dang$ is at most $2 \log r$
times their distance in $\spanner$. Hence:
\begin{align*}
\epl(\cD,\dang) &= \sum_{(u,v) \in \cD} \p(u,v)\cdot d_{\dang}(u,v) \\ & \leq \sum_{(u,v) \in \cD} \p(u,v)\cdot d_{\spanner}(u,v)\cdot 2 \log r \\
& \leq \log r \cdot ~\epl(\cD,\spanner) \\ & = O(\log r) = O(H(Y \mid X))
\end{align*}
The last equality holds since $\cD$ is $r$-regular and uniform.
The bound is asymptotically optimal when $\Delta$ is a constant: it matches our lower bound in Theorem~\ref{th:lowerboundBND}.
\end{proof}

Theorem~\ref{th:spanner2design} allows us to simplify the design of asymptotically optimal networks for
uniform and regular distributions $\cD$ where $G_{\cD}$ has a constant sparse spanner.
In particular, the approach can be used to design optimal
networks for the following large families of distributions which are known to have a constant and sparse graph spanners.
\begin{corollary}\label{cor:uniformreg}
Let $\cD$ describe a uniform and regular
communication request distribution.
Then, it is possible to generate a constant degree  \dan~$\dang$
such that
\begin{align}
\epl(\cD,\dang) \le O(H(Y \mid X) + H(X \mid Y))
\end{align}
in the following scenarios:
\begin{itemize}
\item If, for a constant $c \ge 1$, $G_{\cD}$ has a minimum degree $\delta\geq n^{\frac{1}{c}}$.\footnote{In this case the constant in the $O$ notations depends linearly on $c$.}
\item If $G_{\cD}$ forms a hypercube with $n \log n$ edges.
\item If $G_{\cD}$ forms a (possibly dense) chordal graph.
\end{itemize}
\end{corollary}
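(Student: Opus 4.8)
The plan is to read all three items as instances of a single reduction. Since $\cD$ is regular and uniform of degree $r$, we have $H(Y\mid X)=H(X\mid Y)=\log r$, so by Theorem~\ref{th:spanner2design} it suffices, in each case, to exhibit for $G_{\cD}$ a spanner of constant (or, by that theorem's footnote, constant-on-average) distortion, linear size, and maximum degree $O(r)$. The degree-reduction machinery of Lemma~\ref{lemma:helper} then converts such a spanner into a constant-degree $\dang$ with $\epl(\cD,\dang)=O(\log r)=O(H(Y\mid X)+H(X\mid Y))$, matching the lower bound of Theorem~\ref{th:lowerboundBND}. Thus the corollary reduces, item by item, to quoting a known sparse low-distortion spanner — with the dense case handled slightly differently, as explained below.

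First I would dispose of the dense case, $\delta\ge n^{1/c}$. Here regularity gives $r=\delta$, so $H(Y\mid X)=\log r\ge \tfrac1c\log n=\Omega(\log n)$: the conditional entropy is already logarithmic. I would therefore not try to sparsify $G_{\cD}$ at all (indeed such dense graphs need not admit any linear-size constant-distortion spanner, since a high-girth $\delta$-regular graph forces a distortion-bounded spanner to keep essentially all of its $\tfrac12 n^{1+1/c}$ edges), but instead discard $G_{\cD}$ and take for $\dang$ any fixed constant-degree graph on $V$ of diameter $O(\log n)$, e.g.\ a bounded-degree expander. Every pair — in particular every communicating pair — is then at distance $O(\log n)$ in $\dang$, so $\epl(\cD,\dang)\le O(\log n)\le O(c)\cdot H(Y\mid X)=O(H(Y\mid X)+H(X\mid Y))$, with the hidden constant growing linearly in $c$, exactly as the footnote asserts.

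For the remaining two families I would apply the reduction above verbatim, the only work being to name the spanner. For the hypercube $Q_d$ we have $n=2^d$ and $r=d=\log n$, hence $H(Y\mid X)=\log\log n$; invoking a known linear-size constant-distortion spanner of $Q_d$ (whose maximum degree is $O(d)=O(r)$) and feeding it to Theorem~\ref{th:spanner2design} yields $\epl(\cD,\dang)=O(\log r)=O(\log\log n)$, which is optimal. For chordal graphs I would likewise cite the known fact that chordal graphs admit sparse constant-distortion spanners and apply Theorem~\ref{th:spanner2design} directly.

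The main obstacle is not the reduction, which is mechanical, but supplying the spanners: one must verify, for the hypercube in particular, that a linear-size spanner of constant (or constant-on-average) distortion and maximum degree $O(r)$ genuinely exists, so that the $O(\log r)$ degree-reduction overhead of Lemma~\ref{lemma:helper} is not incurred on top of a superconstant stretch. The conceptual subtlety is the dense case: one must recognise that it does not fit the spanner template at all and is instead obtained essentially for free, because the conditional entropy has already climbed to $\Theta(\log n)$, so that an off-the-shelf logarithmic-diameter bounded-degree graph — unrelated to $G_{\cD}$ — is already asymptotically optimal.
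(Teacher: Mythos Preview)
Your proof is correct and follows the same plan as the paper: the hypercube and chordal cases are handled identically, by plugging the known sparse constant spanners (Peleg--Ullman for hypercubes, Peleg--Sch\"affer for chordal graphs) into Theorem~\ref{th:spanner2design}; and the dense case rests on the same key observation that $H(Y\mid X)\ge\frac1c\log n$, so any bounded-degree host of logarithmic diameter already matches the entropy. The only difference is cosmetic --- for the dense case the paper takes a balanced $\Delta$-ary tree rather than an expander --- and your choice is arguably cleaner, since it literally delivers the constant degree promised in the statement.
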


\noindent See Appendix for proof and details.

We round off our study of uniform and regular distributions
by considering one more interesting family of
(possibly very dense) distributions:
distributions $\cD$ which describe a bounded and \emph{local} doubling dimension,
note that this family is more general than the standard bounded doubling dimension graphs which are sparse.


First, recall that a metric space $(V, d)$ has a constant doubling dimension
if and only if there exists a constant $\lambda$ such that
every ball of radius $r$ in $V$ can be covered by $\lambda$ balls of half the radius
$r/2$, for all $r\geq 1$. In general, the smallest $\lambda$ which satisfies this property for a metric space is called \emph{doubling constant} and $\log_2 \lambda$ is called the \emph{doubling dimension} \cite{Chan16,doubling-zvi,gupta2003bounded,har2006fast}.
A metric space is called \emph{bounded} (a.k.a.~constant or low) doubling dimension if $\lambda$ is a constant.
There has been a wide range of work on spanners for bounded doubling dimension metrics \cite{Chan16,chan2009small,gupta2003bounded,har2006fast}.
However, if the metric is imposed by a graph metric (via shortest paths) then a bounded doubling dimension implies that the graph is nearly regular, of bounded (constant) degree and sparse. Theorem \ref{lem:sparsetransform} already solved the case of sparse $G_{\cD}$, even for non-uniform and irregular distributions.

In the following, however, we are interested in a
more general notion of doubling dimension,
which allows a higher density, unbounded degree:
we call it \emph{locally-bounded doubling dimension}:

\begin{definition}]\label{def:locallydd}
$G_{\cD}$ implied by the distribution $\cD$ has a \emph{locally-bounded doubling dimension}
if and only if there exists a constant $\lambda$ such that the 2-hop neighbors of any node $u$ are covered by
at most $\lambda$  1-hop neighbors. Formally, for each $u\in V$, there exists a set of nodes ${y_1, y_2,...y_\lambda}$, such that:
$$B(u,2)\subseteq\bigcup_{i=1}^{\lambda} B(y_i,1)$$
where $B(u,r)$ are the set of nodes that are at distance at most $r$-hops from $u$ in $G_{\cD}$.
\end{definition}

Clearly, every bounded doubling dimension graph is also of locally-bounded doubling dimension, but the converse is not true.
In particular, the latter enables graphs which could be dense, with unbound degree, and possibly with irregularity of degree.




In the remainder of this section, we will prove the following theorem.
\begin{theorem}\label{thm:ldd}
Let $\cD$ describe a uniform and regular
communication request distribution of locally-bounded doubling dimension.
Then it is possible to design a constant degree  \dan~$\dang$
such that
\begin{align}
\epl(\cD,\dang) \le O(H(Y \mid X) + H(X \mid Y))
\end{align}
This is asymptotically optimal.
\end{theorem}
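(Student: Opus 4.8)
The plan is to reduce Theorem~\ref{thm:ldd} to the spanner framework of Theorem~\ref{th:spanner2design}. Since $\cD$ is $r$-regular and uniform we have $H(Y\mid X)=H(X\mid Y)=\log r$, and because the support of $\cD$ is exactly the edge set of $G_{\cD}$ with uniform weight $1/m$, the expected path length of any subgraph $\spanner$ of $G_{\cD}$ equals its average stretch over edges, $\epl(\cD,\spanner)=\frac{1}{m}\sum_{(u,v)\in E(G_{\cD})} d_{\spanner}(u,v)$. Hence it suffices to exhibit a \emph{sparse} (linear-size) spanner $\spanner$ of $G_{\cD}$ in which every edge of $G_{\cD}$ is stretched by only a constant; Theorem~\ref{th:spanner2design}, through the degree reduction of Lemma~\ref{lemma:helper}, then converts $\spanner$ into a constant-degree $\dang$ at the cost of only a $\log(\maxd)=O(\log r)$ factor, which is absorbed into $\log r=H(Y\mid X)+H(X\mid Y)$ up to constants.

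First I would upgrade the single-scale hypothesis of Definition~\ref{def:locallydd} to a multi-scale volume bound: for every node $u$ and every constant $t\ge 1$, $|B(u,t)|\le \lambda^{t-1}(r+1)$. The key observation is that the $\lambda$ covering centers $y_1,\dots,y_\lambda$ of $B(u,2)$ also cover larger balls at a larger radius: for $w$ at distance $\ell\ge 2$ from $u$, applying the covering to the length-$2$ prefix of a shortest $u$--$w$ path places $w$ within distance $\ell-1$ of some $y_i$, so $B(u,t)\subseteq\bigcup_{i=1}^{\lambda} B(y_i,t-1)$ for all $t\ge 2$. Combined with $r$-regularity ($|B(\cdot,1)|=r+1$), a short induction on $t$ gives the claimed bound. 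This bootstrapping of a single-scale covering into a multi-scale volume/packing estimate is exactly where regularity and the doubling hypothesis interact, and I expect it to be the main obstacle.

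Next I would fix $S$ to be a maximal set of nodes that are pairwise at graph distance $\ge 3$. Maximality guarantees that every node lies within distance $2$ of some net point, and the distance-$\ge 3$ condition makes the $1$-balls around net points pairwise disjoint. Combining disjointness with the volume lemma yields a packing bound: for every $s$ and every constant radius $t$, $|S\cap B(s,t)|\le \lambda^{t}$, since each net $1$-ball has exactly $r+1$ nodes and all of them lie inside $B(s,t+1)$, whose size is at most $\lambda^{t}(r+1)$. I then build $\spanner$ from two kinds of shortest $G$-paths: (i) a spoke of length $\le 2$ from every node $w$ to its nearest net point $c(w)$, and (ii) for every pair of net points at distance $\le 5$, one connecting path of length $\le 5$. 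Part (i) contributes $O(n)$ edges, and part (ii) contributes $O(\lambda^{5} n)$ edges by the packing bound at radius $5$, so $\spanner$ is sparse, with maximum degree $\maxd=O(\lambda r)$.

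Finally I would check the stretch. For an edge $(u,v)\in G_{\cD}$, writing $s=c(u)$ and $s'=c(v)$, the triangle inequality gives $d_{G}(s,s')\le d_{G}(s,u)+1+d_{G}(v,s')\le 5$, so $s$ and $s'$ are joined in $\spanner$ by a part-(ii) path; concatenating with the two spokes yields $d_{\spanner}(u,v)\le 2+5+2=9$. Thus every edge has constant stretch, so $\epl(\cD,\spanner)\le 9$. Feeding this sparse $\spanner$ into Theorem~\ref{th:spanner2design} produces a constant-degree $\dang$ with $\epl(\cD,\dang)=O(\log \maxd)\cdot O(1)=O(\log r)=O(H(Y\mid X)+H(X\mid Y))$, matching the lower bound of Theorem~\ref{th:lowerboundBND}.
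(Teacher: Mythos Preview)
Your proposal is correct and follows essentially the same strategy as the paper: reduce to Theorem~\ref{th:spanner2design} by exhibiting a sparse constant-stretch spanner of $G_{\cD}$ built on a $2$-net (your ``maximal set at pairwise distance $\ge 3$'' is exactly the paper's $\epsilon$-net with $\epsilon=2$), then invoke the degree-reduction machinery. The stretch bound of $9$ and the use of the doubling hypothesis to bound the number of nearby net points are the same.

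There are only cosmetic differences worth noting. You first bootstrap Definition~\ref{def:locallydd} into a multi-scale volume bound $|B(u,t)|\le \lambda^{t-1}(r+1)$ and then derive a \emph{packing} bound on net points; the paper instead argues directly via iterated \emph{covering} (showing $B(i,5)$ is covered by $\lambda^{4}$ radius-$1$ balls, each containing at most one net point), which gives a slightly tighter constant ($\lambda^{4}$ versus your $\lambda^{5}$). You connect pairs of net points at distance $\le 5$ by shortest paths, whereas the paper connects \emph{clusters} by a single $G_{\cD}$-edge; both yield a subgraph spanner of $G_{\cD}$ with linear size and stretch $9$. Finally, since your spanner is a subgraph of the $r$-regular $G_{\cD}$, its maximum degree is at most $r$ (your $O(\lambda r)$ is harmlessly loose), which is exactly what Theorem~\ref{th:spanner2design} needs to finish.
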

\begin{proof}
Again, our proof strategy is to employ
Theorem~\ref{th:spanner2design}. Accordingly,
we show that a constant sparse spanner exists for locally-bounded
doubling dimension networks.
In particular, we will design this spanner based on an $\epsilon$-net construction.
We first recall the definition of \emph{$\epsilon$-nets}~\cite{Chan16}. 
\begin{definition}[$\epsilon$-net] \label{def:r-net}
A subset $V'$ of $V$ is an~\emph{$\epsilon$-net} for a graph $G=(V,E)$ if it satisfies
the following two conditions:
\begin{enumerate}
  \item for every $u$, $v \in V'$, $d_G(u, v)> \epsilon$
  \item for each $w \in V$, there exists at least one $u \in V'$
  such that, $d_G(u, w)\leq \epsilon$
\end{enumerate}
\end{definition}

Let $G_{\cD}=(V,E)$ be a locally-bounded doubling dimension network.
We now first construct a spanner $\spanner'$ of $G_{\cD}$
which is a subgraph of $G_{\cD}$,
using the following $(\epsilon=2)$-net:
we sort all nodes according to decreasing
(remaining) degrees, and iteratively select the high-degree nodes into the 2-net
 one-by-one and remove their 2-neighborhoods.
Clearly, after this process, each node is either part of the 2-net
or has a 2-net node at distance at most 2-hops,
and we have computed a legal 2-net.

To form the spanner $\spanner$, we next
arbitrarily match each node $u$ not in the 2-net to one of its nearest
2-net nodes
$v$,
and select the edges along a shortest path from $u$ to $v$ into the spanner $\spanner$.
This results in a forest of connected components (2-layered stars).
We call these connected components \emph{clusters}
and the corresponding nodes in the 2-net \emph{cluster heads}.
We denote the cluster associated to the net node $u$ by $Cl(u)$.

We next connect the connected clusters to each other, in a sparse manner.
Towards this end, we connect each pair of clusters, with an arbitrary single edge,
if they contain at least one pair of communicating nodes in $G_{\cD}$.
We can claim the following.
\begin{lemma}\label{th:spanner}
$\spanner$ is a constant and sparse spanner of $G_{\cD}$ (with distortion $9$) .
\end{lemma}
\begin{proof}
Let $(u,v)$ be an edge in $G_{\cD}$ and $u\in Cl(u)$, $v\in Cl(v)$.
By construction, there are nodes $x \in Cl(u)$ and $y \in Cl(v)$
that are connected by an edge in $\spanner$, and hence there is a path $u, C(u), x, y, C(v), v$ in  $\spanner$.
Therefore, $d_S(u,v) \le d_S(u,Cl(u)) + d_S(Cl(u),x) + d_S(x,y)+ d_S(y,Cl(v)) +d_S(Cl(v),v)  \le 9 $.

The spanner is also sparse:
in a nutshell, due to the 2-net properties, we know that the
distance between communicating cluster heads is at most 5:
since in a locally doubling dimension graph the number of cluster heads
at distance 5 is bounded, only a small number of
neighboring clusters will communicate.
More formally, after associating each
node to some unique cluster,
we have a linear number of edges in the spanner.
Next we bound the number of outgoing edges from each
cluster.
Let $(u,v)$ be such an edge where $u \in Cl(u)$, $v\in Cl(v)$.
Let the cluster heads of $Cl(u)$ and $Cl(v)$ be $i$ and $j$, respectively.
By construction $i$ and $j$ are at most at distance 5 in $G_{\cD}$, i.e., $d_{G_{\cD}} (i,j) \leq 5$.
So, if we can bound the number of 2-net nodes which lie
within 5 hops from some net node $i$,
it will give us a bound on the number of edges which we add between $Cl(u)$ and other clusters.
According to Definition~\ref{def:locallydd}, all the two hop neighbors of $i$ can be covered within
one hop neighbors of $\lambda$ nodes, where
$\lambda$ is the corresponding doubling constant. If we consider two hop neighbors of all
these $\lambda$ many nodes, they cover all the 3 hop neighbors of $i$. To cover the
2 hop neighbors of each of
these nodes, we again require one hop neighbors of $\lambda$ nodes.
So, to cover all 3 hop neighbors of $i$, we require at most $\lambda^2$
one hop neighbors. Inductively, by repeating this argument, we require one hop
neighbors of at most $\lambda^4$ nodes to cover all the 5 hop neighbors
of $i$. Since we constructed a 2-net, each of these $\lambda^4$ balls
with radius one contains at most one 2-net node. Hence there are at most $\lambda^4$
2-net nodes which are at a distance 5 hops or less from $i$. Consequently, there are
at most $\lambda^4$ inter-cluster edges associated to some cluster $Cl(u)$, or cluster head $i$.
Since there can not be more than linear number of clusters, hence the number of
edges in $\spanner'$  is also linear.
\end{proof}

Using Lemma \ref{th:spanner} we can directly use Theorem \ref{th:spanner2design} and conclude the proof of Theorem~\ref{thm:ldd}.
\end{proof}

In fact, it turns out that if we consider a \emph{metric} spanner, and by using auxiliary edges,
we can improve the above distortion and constract better constant sparse spanner $\spanner'$.
The idea is to add inter-cluster edges only between the cluster heads.
This will reduce the distortion to $5$ while keeping the same number to total edges.
The degree of each node in $\spanner'$ will increase by at most a constant, $\lambda^4$.
Adjusting Theorem \ref{th:spanner2design} respectively to support metric spanners (and only a subgraph spanner) will enable
us to use  $\spanner'$ instead of $\spanner$.

\section{Conclusion}\label{sec:conclusion}

This paper initiated the study of a fundamental
network design problem. While our work is motivated in particular
by emerging technologies for more flexible datacenter interconnects
as well as by peer-to-peer overlays,
we believe that our model is very natural and of interest beyond
this specific application domain considered in this paper.
For example, the design of a sparse, bounded-degree and distance-preserving network
can also be understood from the perspective of graph sparsification~\cite{spielman2004nearly}: the designed
network can be seen as a compact representation of the original network.

The subject of
bounded network design offers several interesting avenues
for future research. In particular, while we presented asymptotically
optimal network design algorithms for a wide range of distributions and
while we believe that the entropy is the right measure to assess network designs,
there remain several (dense) distributions for which the quest for
optimal network designs remains open, perhaps also requiring
us to explore alternative flavors of graph entropy.

\noindent  \textbf{Acknowledgments.}
We would like to thank Michael Elkin for many inputs and discussions.

{

}


\appendix

\section*{Appendix}

\section{Deferred Proofs}

\textbf{Lemma \ref{lem:DeltaregularH}}
\begin{proof} [Proof of Lemma \ref{lem:DeltaregularH}]
Let $T$ be any $\Delta$-ary tree over the probability $\bar{p}$ (with nodes $1,2, \dots n$).  To each node which has less than $\Delta-1$ children in $T$,
add leaves to make the number of its children $\Delta$. Call this tree~$T'$.
There would be~$n(\Delta-1)+1$ leaves in $T'$. This can easily be shown by induction on the number of internal nodes.
The frequency to access the internal nodes of $T'$ remains $\bar{p}$. The frequency to access the leaves will be $\bar{p'}= \bar{0}$, namely
$p'_j=0$ for all leafs $j=1,2,...,n(\Delta-1)+1$.
Let~$b_i$,~$a_j$ be the distances of the internal nodes and the leaves respectively, from the root.
The expected path length to reach nodes in $T'$ from the 
root would be~$\sum p_i b_i +\sum p'_ja_j = \sum p_i b_i$ which is $\epl(\bar{p},T)$.
We now define:
$$
L=\sum_{i=1}^{n}(\Delta+1)^{-(b_i+1)}+\sum_{j=1}^{n(\Delta-1)+1}(\Delta+1)^{-a_j}
$$
Using induction, it can be easily shown that~$L=1$ and hence~$\log L=0$. Define,
$$
f'_i=(\Delta+1)^{-(b_i+1)} ~~~\text{for} ~~~1\leq i\leq n
$$
and
$$
f''_j=(\Delta+1)^{-a_j} ~~~\text{for} ~~~1\leq j\leq n(\Delta-1)+1
$$
So,~$\sum_{i=1}^n f'_i + \sum_{j=1}^{n(\Delta-1)+1}f''_j = 1$ and  consequently~$\{f'_1,...f'_n,f''_1,...f''_{n(\Delta-1)+1}\}$ is a distribution.
Recall Gibbs' inequality~\cite{cover2012elements} which states that $\sum p_i\log{1/p_i} \le  \sum p_i\log{1/f_i}$ for any distribution $\bar{p}$ and $\bar{f}$.
Thus, we can have:
\begin{align*}
H_\Delta(\bar{p})&=\sum p_i\log{1/p_i}\\
&=\sum p_i\log{1/p_i} + \sum p'_j\log{1/p'_j}\\
&\leq \sum p_i\log{1/f'_i} + \sum p'_j\log{1/f''_j}\\
&=\sum p_i\log(\Delta+1)^{(b_i+1)}+\sum p'_j\log(\Delta+1)^{a_j}\\
&=\log(\Delta+1)(\sum p_i(b_i+1)+\sum p'_ja_j\\
&=\log(\Delta+1)(\epl(\bar{p},T) + 1)
\end{align*}
\end{proof}

\subsection*{Corollary \ref{cor:uniformreg}}

\begin{proof}[Proof of Corollary \ref{cor:uniformreg}]
We prove the claims in turn.\\

\noindent \textbf{Case $G_{\cD}$ has a minimum degree $\Delta \geq n^{\frac{1}{c}}$:}
For this distribution $\cD$, we have, $H_{\Delta}(Y\mid X)=H_{\Delta}(X\mid Y) \ge \frac{1}{c}\log_{\Delta}n$.
Create a $\Delta$-ary tree $\dang$ with the nodes of $G_{\cD}$. Then,
\begin{align*}
\epl(\cD,\dang) &= \sum_{(u,v) \in \cD} \p(u,v)\cdot d_{\dang}(u,v)\\
&\leq \sum_{(u,v) \in \cD} \p(u,v)\cdot 2 \log_{\Delta}n\\
&\leq 2 \log_{\Delta}n\\
&\leq 2c\cdot H_{\Delta}(Y\mid X)
\end{align*}

\noindent \textbf{Case hypercube.}
Follows directly from Theorem~\ref{th:spanner2design}
and the fact that hypercubes admit a sparse 3-spanner~\cite{Peleg89},
allowing us to design a~$O(\log\log n)$ (metric) spanner of bounded degree.\\

\noindent \textbf{Case chordal graphs.}
Follows from Theorem~\ref{th:spanner2design}
and the fact that chordal graphs have a
constant sparse spanner~\cite{Peleg1989}.
\end{proof}

\section{Notions of Distortion}
In the spanner problem, the goal is to find a sparse subgraph~$\spanner=(V,E')$ of~$G$, i.e.,
$E'\subseteq E$ with~$|E'|\leq O(n)$ which
approximately preserves the distances of~$G$ despite having less edges.
Usually, the following notion of average distortion~\cite{Chan2006}  is considered
and referred to as the \emph{all-pairs distortion}:

\begin{definition}[All-Pairs Distortion ($\apd$)]\label{def:distortion-all}
The average all-pairs distortion on a spanner~$\spanner$ of a graph~$G$ is
$$\apd(G,\spanner) =\frac{1}{\binom{n}{2}} \sum_{\{u,v\}\in \binom{V}{2}}
\frac{d_{\spanner}(u,v)}{d_G(u,v)}$$
\end{definition}

We in this paper are only interested in preserving distances
between \emph{communicating neighbors} in~$G$,
henceforth defined as the \emph{neighborhood distortion}:

\begin{definition}[Neighborhood Distortion ($\nd$)]\label{def:distortion-neigh}
The average neighborhood distortion on a spanner~$\spanner$ of a graph~$G$ (with~$m$ edges) is,
\begin{eqnarray*}
\nd(G,\spanner) & =\frac{1}{\card{E(G)}} \sum_{\{u,v\}\in E(G)} \frac{d_{\spanner}(u,v)}{d_G(u,v)} \\ &= \frac{1}{m} \sum_{\{u,v\}\in E(G)} d_{\spanner}(u,v)
\end{eqnarray*}
\end{definition}

Next we claim that these two notions of distortion are indeed different, 
that is, low all-pairs distortion
does not imply a low neighborhood distortion; and vice versa.

\begin{claim}
There is a family of graphs $G_n$ and a corresponding family of spanners $S_n$ (where $n$ is the size of the graph and $S_n$ is a spanner of $G_n$) where
\begin{align}
\lim_{n \rightarrow \infty} \frac{\apd(G_n, S_n)}{\nd(G_n, S_n)} = \infty
\end{align}
\end{claim}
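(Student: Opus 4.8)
The plan is to exhibit a single sparse host graph together with a deliberately \emph{bad} (but still sparse) subgraph spanner in which almost every \emph{pair} is forced through one overloaded bottleneck, while almost every \emph{edge} is preserved. Concretely, I would build $G_n$ from two hub vertices $a,b$, each carrying $p=\Theta(n)$ pendant leaves (a set $A$ hanging off $a$, a set $B$ hanging off $b$), plus two $a$--$b$ connections: the direct edge $\{a,b\}$ and a long detour path $P$ of length $\ell=\Theta(n)$ on fresh internal vertices. Then $m=1+\ell+2p=\Theta(n)$, so $G_n$ is sparse, and every cross pair $x\in A,\,y\in B$ has $d_{G_n}(x,y)=3$ via $x\text{--}a\text{--}b\text{--}y$ (the pendants have unique neighbours, so nothing shorter exists). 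I take $S_n$ to be $G_n$ with the single edge $\{a,b\}$ deleted and everything else kept; $S_n$ stays connected through $P$ and is a subgraph with $\Theta(n)$ edges.

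First I would bound the neighborhood distortion. The only edge of $G_n$ absent from $S_n$ is $\{a,b\}$, whose stretch is $d_{S_n}(a,b)=\ell$ since $P$ is the unique remaining $a$--$b$ route; every other edge is retained and has stretch exactly $1$. Hence
\begin{align*}
\nd(G_n,S_n)=\frac{1}{m}\bigl(\ell+(m-1)\bigr)=1+\frac{\ell-1}{m}\le 2=O(1),
\end{align*}
using $\ell\le m$.

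Next I would lower bound the all-pairs distortion using only the cross pairs. Because each leaf of $A$ has $a$ as its sole neighbour and each leaf of $B$ has $b$, any $x$--$y$ path in $S_n$ must traverse $P$, so $d_{S_n}(x,y)=\ell+2$ while $d_{G_n}(x,y)=3$. There are $p^2=\Theta(n^2)$ such pairs, each contributing stretch $(\ell+2)/3$, and all remaining stretch terms are non-negative, so
\begin{align*}
\apd(G_n,S_n)\ge \frac{1}{\binom{n}{2}}\cdot p^2\cdot\frac{\ell+2}{3}=\Omega(\ell)=\Omega(n).
\end{align*}
Combining the two estimates gives $\apd(G_n,S_n)/\nd(G_n,S_n)=\Omega(n)\to\infty$, which proves the claim.

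The only delicate point is the joint tuning of the two parameters: I must keep $\ell=O(m)$ so that the one stretched edge does not inflate the edge-average $\nd$, while keeping $p=\Theta(n)$ so that the $\Theta(n^2)$ cross pairs dominate the all-pairs average; the choice $\ell=\lfloor n/2\rfloor$ and $p=\Theta(n)$ satisfies both simultaneously. The other place requiring (easy) care is verifying that no shorter $x$--$y$ route exists in $S_n$, which is immediate since the leaves are degree-one. I expect the conceptual core — realizing that $\nd$ averages over the $O(n)$ edges whereas $\apd$ averages over $\Theta(n^2)$ pairs, so concentrating all damage on distant pairs separates them — to be where the argument is won; the rest is routine arithmetic.
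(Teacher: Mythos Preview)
Your proof is correct and takes a genuinely different, more elementary route than the paper. The paper builds $G_n$ as a central star on $n/\log n$ nodes in which each star-leaf carries a clique of size $\log n$; its spanner $S_n$ replaces every clique by a star and the central star by a balanced $(\log n)$-ary tree (so $S_n$ is a bounded-degree tree spanner that uses auxiliary edges). This yields $\nd=O(1)$, since the $\Theta(n\log n)$ clique edges dominate the edge-average and each is stretched by at most $2$, while $\apd=\Omega(\log n/\log\log n)$, since the $\Theta(n^2)$ inter-clique pairs all pick up the tree diameter. Your double-hub-plus-bypass construction is simpler, gives a subgraph spanner (you only delete one edge), and achieves a far stronger separation, $\apd/\nd=\Omega(n)$ versus the paper's $\Omega(\log n/\log\log n)$. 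What the paper's construction buys is alignment with its theme: its spanner is a bounded-degree tree, so the example shows the two distortion notions diverge even within the class of host networks the paper studies. Your hubs $a,b$ have degree $\Theta(n)$, so your example sits outside that class---but the claim as stated imposes no degree constraint, and your argument fully establishes it.
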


\begin{claim}
There is a family of graphs $G_n$ and a corresponding family of spanners $S_n$ (where $n$ is the size of the graph and $S_n$ is a spanner of $G_n$) where
\begin{align}
\lim_{n \rightarrow \infty} \frac{\nd(G_n, S_n)}{\apd(G_n, S_n)} = \infty
\end{align}
\end{claim}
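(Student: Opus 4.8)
The plan is to exhibit a \emph{sparse} graph family in which a constant fraction of the \emph{edges} are heavily stretched by the spanner, while \emph{typical} pairs (which vastly outnumber the edges) are barely stretched. Concretely, for even $n$ let $G_n$ be the cycle $C_n$ on vertices $0,1,\ldots,n-1$ together with the $n/2$ ``diameter'' chords $\{i,\,i+n/2\}$ for $0\le i<n/2$; thus $\card{E(G_n)}=m=3n/2$, so $G_n$ is sparse. As the spanner I take $S_n=C_n$, i.e., delete every chord. This is a connected spanning subgraph (hence a legal, if large-distortion, spanner in the sense of the paper's definition), so $\nd$ and $\apd$ are well defined. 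I will show $\nd(G_n,S_n)=\Omega(n)$ while $\apd(G_n,S_n)=O(\log n)$, which forces the ratio to infinity.

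First I would lower-bound $\nd$. Since neighbors have $d_{G_n}=1$, we have $\nd(G_n,S_n)=\frac{1}{m}\sum_{\{u,v\}\in E(G_n)}d_{S_n}(u,v)$. The $n/2$ diameter chords are absent from $S_n$, and in the pure cycle the two endpoints of each diameter lie at distance exactly $n/2$; these edges alone contribute $(n/2)\cdot(n/2)$ to the sum, so
\[
\nd(G_n,S_n)\;\ge\;\frac{(n/2)(n/2)}{3n/2}\;=\;\frac{n}{6}.
\]

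The main work, and the main obstacle, is the upper bound on $\apd$. Here I would classify an unordered pair by its cycle distance $\delta=d_{S_n}(u,v)\in\{1,\ldots,n/2\}$, noting that there are $\Theta(n)$ pairs for each value of $\delta$. Because the chords are involutions, a shortest path in $G_n$ can profit from at most one diameter jump, giving the exact guest distance $d_{G_n}(u,v)=\min\{\delta,\,1+(n/2-\delta)\}$. For $\delta\le n/4$ the cycle route is shortest, so $d_{G_n}=\delta$ and the distortion $d_{S_n}/d_{G_n}$ is exactly $1$; only the \emph{near-antipodal} pairs with $\delta=n/2-s$ (where $0\le s\le n/4$) are distorted, by a factor $\frac{n/2-s}{1+s}=\Theta(n/s)$. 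Summing the distortion over all pairs therefore behaves like $\Theta(n)\cdot\frac{n}{2}\sum_{s=1}^{n/4}\frac{1}{s}$, a \emph{harmonic} sum equal to $\Theta(n^2\log n)$; dividing by $\binom{n}{2}=\Theta(n^2)$ yields $\apd(G_n,S_n)=O(\log n)$. The delicate point is precisely that the badly distorted pairs are few and that their distortions aggregate only \emph{logarithmically} rather than linearly: were the bad distortions spread over $\Theta(n)$ pairs each, $\apd$ would be $\Theta(n)$ and the separation would collapse. Establishing the $\min\{\delta,1+(n/2-\delta)\}$ formula (in particular that intermediate chords never help) is the technical crux.

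Combining the two bounds gives $\nd(G_n,S_n)/\apd(G_n,S_n)=\Omega(n/\log n)\to\infty$, proving the claim. I would remark that the same phenomenon arises from a $2\times(n/2)$ ladder whose spanner keeps both rails and a single rung: the removed rungs are stretched linearly, again producing $\nd=\Theta(n)$ against $\apd=\Theta(\log n)$, so the choice of gadget is not essential.
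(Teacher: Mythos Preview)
Your construction is correct and genuinely different from the paper's. The paper takes $G_n$ to be a $\sqrt{n}$-clique with a path of length $\sqrt{n}$ hanging off each clique vertex, and $S_n$ the subgraph obtained by replacing the clique by a balanced bounded-degree tree on its vertices while keeping all the paths intact. There the $\Theta(n)$ clique edges acquire distortion $\Theta(\log n)$ and the $\Theta(n)$ path edges keep distortion $1$, so $\nd=\Theta(\log n)$; meanwhile a typical pair sits on two different long paths and already has $G_n$-distance $\Theta(\sqrt{n})$, so the extra $O(\log n)$ hops through the tree change its ratio by only $1+o(1)$, giving $\apd=\Theta(1)$ and a separation ratio of $\Theta(\log n)$.

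Your cycle-plus-diameters gadget is more elementary (no clique, no tree replacement, and the shortest-path analysis in $G_n$ reduces to the one-line observation that two diameter jumps compose to the identity) and it yields a quantitatively stronger separation, $\nd/\apd=\Theta(n/\log n)$ versus the paper's $\Theta(\log n)$. On the other hand, the paper's example keeps $\apd$ \emph{bounded} while $\nd$ diverges, which is a slightly sharper qualitative statement than yours, where both quantities grow and only the ratio separates. Your harmonic-sum bookkeeping for $\apd$ is correct; the only place to be a little careful is the $s=0$ term (the antipodal pairs), but its $\Theta(n^2)$ contribution is absorbed into the $\Theta(n^2\log n)$ total and does not affect the conclusion.
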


We will show this by examples.
First consider Figure \ref{f:fig5} (a).
There is a~$\Theta(\sqrt{n})$-sized clique
in the center, and each of those clique nodes is associated with a
line containing~$\Theta(\sqrt{n})$ nodes.
To compute the optimal tree spanner with maximum degree~$\Delta$, we
turn the clique nodes into a~$\Delta$-regular tree
of diameter~$\Theta(\log_\Delta{\sqrt{n}})=O(\log_\Delta{n})$.
The nodes remain connected with the corresponding lines.

\begin{figure*}
    \begin{center}
	\begin{tabular}{cc}
	   \includegraphics[width=.45\textwidth]{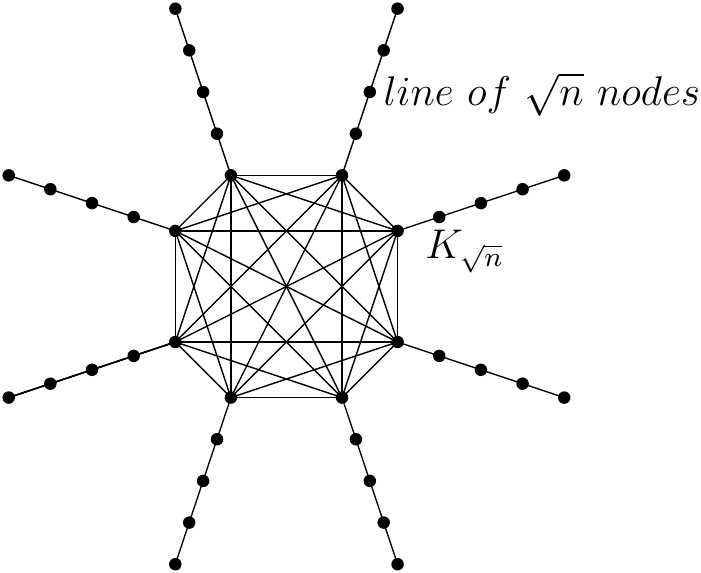} &
	    \includegraphics[width=.45\textwidth]{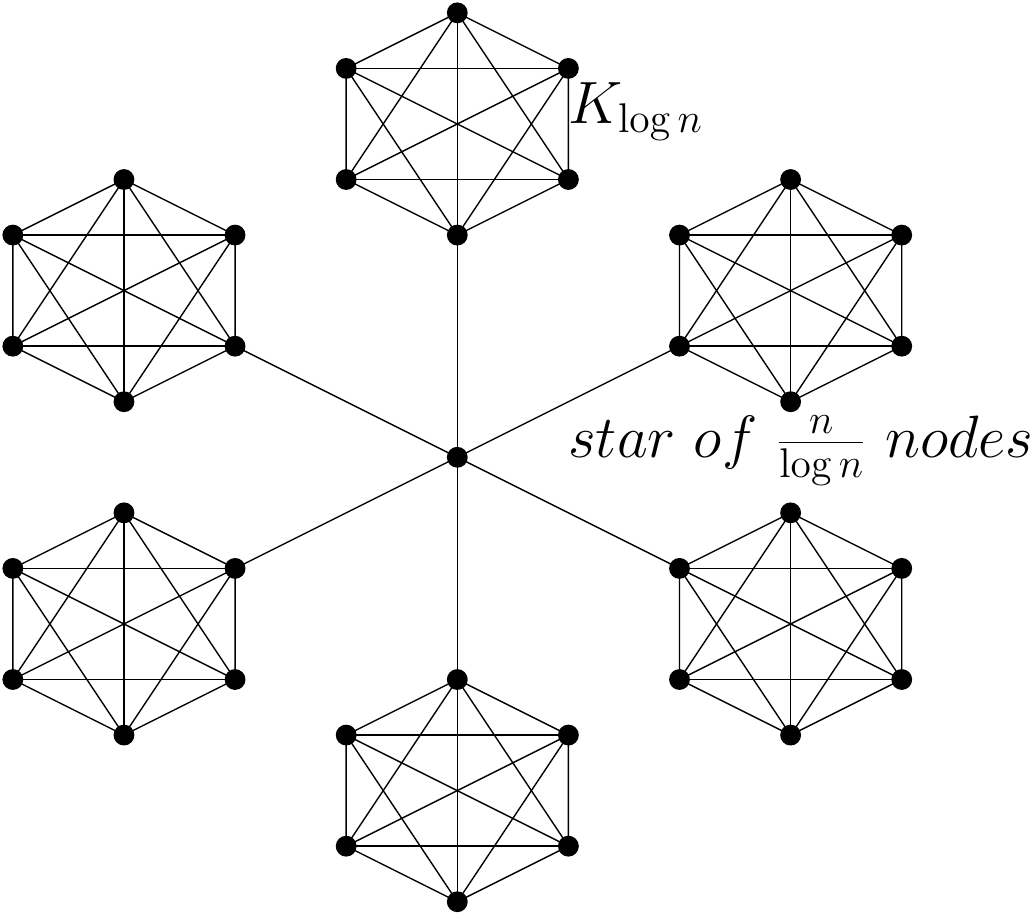} \\
	    (a) & (b) 
	\end{tabular}
	   \end{center}
        \caption{(a) Different distortions on tree spanner w.r.t.~different definitions.
        (b) Different distortion on tree network design (with auxiliary edges) w.r.t.~different definitions}
\label{f:fig5}
\end{figure*}

The asymptotic distortion w.r.t.~Definition~\ref{def:distortion-neigh} is:
$$
\frac{n \cdot\log_\Delta{n}+n \cdot 1}{n}=\Theta(\log_\Delta{n})
$$
Now we discuss about all pair distortion on the same spanner for this graph.
Consider any two nodes which belong to different lines, but are also a member of the clique.
Their distance in the spanner may become~$\log \sqrt n$. So, according to Definition~\ref{def:distortion-all},
$d_{\spanner}(u,v)/d_G(u,v)$ is equal to~$\frac{1}{2}\log n$. Now we provide an upper bound on number of such
pairs~$\varphi$  whose distance can be up to~$O(\log n)$ times their earlier distance. Consider all the nodes on
all the lines which are within distance~$\log n$ from the corresponding clique node.
On the original graph, distances between any two such nodes were in the range~$[1,2\log n +1]$. The number of such node pairs is~$\sqrt n \log^2 n$. Clearly,~$\varphi<\sqrt n\log^2 n$.
Now consider any node on a line which is at least at a distance~$(1+\log n)$ from the corresponding
clique node on the line. The distance from this node to any other node on any other line was at least~$(2+\log n)$.
On the spanner, this distance can be at most~$1+2\log n$. So, for all such node pairs,
$d_{\spanner}(u,v)/d_G(u,v)<2$. Hence, according to Definition \ref{def:distortion-all}, all pair distortion
becomes constant as given in the following expression.

$$
\frac{\sqrt n \log^2 n\cdot \log n + (n^2-\sqrt n \log^2 n)\cdot 2}{n^2}=\Theta(1)
$$

Now look at Figure~\ref{f:fig5} (b). There is a star of size~$n/\log{n}$
in the center, and each of the~$n/\log{n}$ nodes is associated with a clique of
size~$\log{n}$. Thus, in total, there are~$n\log{n}$ edges.
To compute a tree spanner of degree~$\Delta=\log{n}$,
we replace the cliques consisting of~$\log{n}$ nodes with stars of
size~$\log{n}$ nodes; the star of~$n/\log{n}$ nodes in the center is
transformed into a~$\Delta$-regular tree whose diameter is~$\Theta(\log{n}/\log{\log{n}})$.
Then each tree node is associated with the root of the star
corresponding to its associated clique. This tree spanner contains auxiliary edges too.
Then, the asymptotic distortion w.r.t.~Definition~\ref{def:distortion-neigh}
is:
$$
\frac{\frac{n}{\log n}\log^2n+\frac{n}{\log n}\cdot\frac{\log n}{\log \log n}}{n\log n}=O(1)
$$
In contrast, the distortion w.r.t.~Definition~\ref{def:distortion-all} is
$\Omega(\log n/\log \log n)$ since all pairs from the two different cliques now suffer
a distortion of~$\Theta(\log n/\log \log n)$, and there are~$O(n^2)$ such pairs.
\section{Section 5}
\begin{corollary}\label{cor:onaverage}
Theorem \ref{th:spanner2design} holds even if there exists a sparse spanner~$\spanner$ with constant neighborhood distortion instead of having a constant spanner.
\end{corollary}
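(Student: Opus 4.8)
The plan is to observe that the proof of Theorem~\ref{th:spanner2design} never actually needs the full strength of a constant (worst-case or all-pairs) distortion spanner: the only place that hypothesis enters is through Eq.~\eqref{eq:onaverage}, where one asserts $\epl(\cD,\spanner)=O(1)$. It therefore suffices to show that constant \emph{neighborhood} distortion already yields $\epl(\cD,\spanner)=O(1)$, after which the remainder of the proof of Theorem~\ref{th:spanner2design} applies verbatim.

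First I would use the assumption that $\cD$ is uniform and regular. Then $\p(u,v)=1/m$ on every edge $(u,v)\in E(G_{\cD})$ and $\p(u,v)=0$ otherwise, where $m=\card{E(G_{\cD})}$. Substituting into the definition of expected path length gives
$$
\epl(\cD,\spanner)=\sum_{(u,v)\in\cD}\p(u,v)\,d_{\spanner}(u,v)=\frac{1}{m}\sum_{(u,v)\in E(G_{\cD})}d_{\spanner}(u,v).
$$
Since every pair in the sum is an edge of $G_{\cD}$, we have $d_{G_{\cD}}(u,v)=1$, so the right-hand side equals $\frac{1}{m}\sum_{(u,v)\in E(G_{\cD})}\frac{d_{\spanner}(u,v)}{d_{G_{\cD}}(u,v)}$, which is precisely $\nd(G_{\cD},\spanner)$ by Definition~\ref{def:distortion-neigh}. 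Hence a spanner of constant neighborhood distortion gives $\epl(\cD,\spanner)=O(1)$, exactly the quantity that Eq.~\eqref{eq:onaverage} required.

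With this in hand the rest of the argument is unchanged. The spanner $\spanner$ is still sparse, so its average degree is $O(1)$; applying Lemma~\ref{lemma:helper} to $\spanner$ (of degree at most $r$) produces a constant-degree host $\dang$ satisfying $d_{\dang}(u,v)\le 2\log r\cdot d_{\spanner}(u,v)$. Summing over $\cD$ and using the bound on $\epl(\cD,\spanner)$ just established yields
$$
\epl(\cD,\dang)\le 2\log r\cdot\epl(\cD,\spanner)=O(\log r)=O(H(Y\mid X)),
$$
as in Theorem~\ref{th:spanner2design}, and this is asymptotically optimal by Theorem~\ref{th:lowerboundBND}.

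The only subtle point, and the real crux of the corollary, is the identification $\epl(\cD,\spanner)=\nd(G_{\cD},\spanner)$: one must notice that because $\cD$ places mass only on the edges of $G_{\cD}$ (where the guest distance is exactly $1$), the expected path length is literally the average spanner distance over edges, i.e.\ the neighborhood distortion, and has nothing to do with the all-pairs distortion $\apd$. Indeed, the accompanying examples show that $\nd$ and $\apd$ are incomparable, so the content of the corollary is genuinely that the weaker, edge-local notion is the one that governs the construction.
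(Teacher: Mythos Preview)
Your proposal is correct and follows essentially the same approach as the paper: the paper's proof also reduces to the single observation that for uniform $\cD$ one has $\epl(\cD,\spanner)=\nd(G_{\cD},\spanner)$, so constant neighborhood distortion yields the constant in Eq.~\eqref{eq:onaverage}, after which Theorem~\ref{th:spanner2design} applies unchanged. Your write-up is simply more explicit about re-running the remainder of that theorem's proof.
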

\begin{proof}
If the request distribution~$\cD$ is uniform, i.e.,~$p(i,j)=1/m$ for all the~$m$ non-zero entries of the matrix~$M_{\cD}~$, then from Definition~\ref{def:distortion-neigh} and from our objective function,
$$
\epl(\cD, \spanner)  = \nd(G, \spanner)
$$
Hence~$\nd(G,\spanner)$ is constant, which implies that~$\epl(\cD, \spanner)$ is also constant i.e.,
Equation~\ref{eq:onaverage} holds if~$\nd(G,\spanner)$ is constant. 
\end{proof}

%
%

\end{document}